\documentclass[11pt]{article}

\usepackage[T1]{fontenc}
\usepackage[utf8]{inputenc}
\usepackage{lmodern}

\usepackage{microtype}

\usepackage{amsmath,amssymb,amsthm,mathtools}

\theoremstyle{plain}
\newtheorem{theorem}{Theorem}[section]
\newtheorem{lemma}[theorem]{Lemma}

\newtheorem{corollary}[theorem]{Corollary}

\theoremstyle{definition}

\theoremstyle{remark}
\newtheorem{remark}[theorem]{Remark}

\DeclareMathOperator{\dom}{dom}

\newcommand{\SAT}{\ensuremath{\textnormal{\textsc{3SAT}}}}
\newcommand{\DNP}{\ensuremath{\mathrm{DNP}}}
\newcommand{\DP}{\ensuremath{\mathrm{DP}}}
\newcommand{\NP}{\ensuremath{\mathrm{NP}}}
\newcommand{\coNP}{\ensuremath{\mathrm{coNP}}}

\newcommand{\ExactDNP}[1]{\ensuremath{\mathrm{Exact}\text{-}#1\text{-}\DNP}}
\newcommand{\SATUNSAT}{\ensuremath{\SAT\text{-}\mathrm{UNSAT}}}

\usepackage{xcolor}
\usepackage[
  colorlinks=true,
  linkcolor=blue,
  citecolor=blue,
  urlcolor=blue
]{hyperref}

\usepackage[nameinlink,noabbrev]{cleveref}
\crefname{theorem}{Theorem}{Theorems}
\Crefname{theorem}{Theorem}{Theorems}
\crefname{lemma}{Lemma}{Lemmas}
\Crefname{lemma}{Lemma}{Lemmas}
\crefname{corollary}{Corollary}{Corollaries}
\Crefname{corollary}{Corollary}{Corollaries}
\crefname{section}{Section}{Sections}
\Crefname{section}{Section}{Sections}
\crefname{remark}{Remark}{Remarks}
\Crefname{remark}{Remark}{Remarks}

\title{Closing the Complexity Gap for Exact Domatic Number at Three and Four}
\title{Closing the Complexity Gap for Exact Domatic Number at Three and Four}

\author{
Holger Spakowski\\
Department of Mathematics and Applied Mathematics\\
University of Cape Town\\
Rondebosch 7701, South Africa\\
\texttt{Holger.Spakowski@uct.ac.za}
}

\date{}

\begin{document}

\maketitle

\begin{abstract}
The exact domatic-number problem asks, for a fixed integer \(k\), whether a
given graph \(G\) satisfies \(\dom(G)=k\). Riege and Rothe proved
\(\DP\)-completeness for every fixed \(k\ge 5\), while the cases \(k=3\)
and \(k=4\) remained open. We close this classification gap. The main
ingredient is a polynomial-time reduction from \(\SAT\) whose output
graphs have domatic number \(4\) in the satisfiable case and domatic number
\(2\) in the unsatisfiable case; in particular, the reduction never produces
a graph of domatic number \(3\). This directly realizes the route suggested
by Riege and Rothe for closing the remaining cases. Together with a simpler
three-versus-two reduction, this yields \(\DP\)-completeness of
\(\ExactDNP{3}\) and \(\ExactDNP{4}\). The proofs are constructive and give
explicit graph gadgets whose local domination constraints encode truth
assignments and clause satisfaction. The soundness arguments show conversely
that any sufficiently large domatic partition enforces the intended consistency
conditions and therefore yields a satisfying assignment. Consequently,
\(\ExactDNP{k}\) is \(\DP\)-complete for every fixed \(k\ge 3\), completing
the fixed-value classification from \(k=3\) onward.
\end{abstract}

\begin{quote}
\small
\noindent\textbf{Keywords:}
domatic number; exact optimization problems; DP-completeness; boolean hierarchy;
graph domination; polynomial-time reductions
\end{quote}

\section{Introduction}
\label{sec:introduction}

A dominating set in a graph \(G\) is a set of vertices that reaches every
vertex of \(G\) within distance at most one.  The domatic number problem asks
for a partition of the vertex set into as many disjoint dominating sets as
possible.  The maximum number of parts in such a partition is the domatic
number of \(G\), denoted here by \(\dom(G)\).
The problem was already discussed in the early work of Cockayne and
Hedetniemi on domination in graphs
\cite{CockayneHedetniemi1975,CockayneHedetniemi1977},
and it has since become a standard graph
partitioning problem.  Its interpretation is particularly natural in network
settings: a dominating set may model a set of facilities or transmitting
stations that can serve all vertices of the network,
and a domatic partition then corresponds to several disjoint such service
layers \cite{CockayneHedetniemi1977,RiegeRothe2004ExactDNP}.

For fixed \(k\), the decision problem \(k\)-DNP asks whether
\(\dom(G)\geq k\).
It is known that \(k\)-DNP is NP-complete for every fixed \(k\geq 3\),
whereas \(2\)-DNP is polynomial-time decidable
\cite{GareyJohnson1979,KaplanShamir1994}.
In this paper we study the corresponding exact problems.  For fixed \(k\),
let
\[
  \ExactDNP{k}
  =
  \{G : \dom(G)=k\}.
\]
Exact versions of NP-optimization problems often naturally lie not merely in
NP, but in the class \DP, the second level of the boolean hierarchy over NP:
one has to certify that the optimum is at least a given value and also that it
is not larger than that value.  This viewpoint goes back to the work of
Papadimitriou and Yannakakis on DP, to the subsequent development of the
boolean hierarchy by Cai et al., and to Wagner's technique for proving
hardness in the levels of the boolean hierarchy
\cite{PapadimitriouYannakakis1984,CaiEtAl1988,CaiEtAl1989,Wagner1987}.

Riege and Rothe initiated the systematic study of exact domatic-number
problems within this framework. They proved that Exact-\(i\)-DNP is
\DP-complete for every fixed \(i\geq 5\), while Exact-\(2\)-DNP is
\coNP-complete \cite{RiegeRothe2004ExactDNP}.
Moreover, \(\ExactDNP{1}\) is polynomial-time
decidable, since for every nonempty graph \(\dom(G)=1\) if and only if
\(G\) has an isolated vertex.  Hence the only fixed exact values not covered
by the previous classification were
\[
  i=3
  \qquad\text{and}\qquad
  i=4,
\]
which were left open in the original work of Riege and Rothe
\cite{RiegeRothe2004ExactDNP}.
The same gap is also recorded in later
expositions~\cite{RiegeRothe2006Survey,RotheBook}.
This gap is analogous in spirit to the earlier gap in exact graph
colorability: Wagner proved \DP-completeness for sufficiently large exact
chromatic numbers, and Rothe later proved the optimal threshold by showing
that \(\mathrm{Exact}\text{-}4\text{-Colorability}\) is \DP-complete
\cite{Wagner1987,Rothe2003ExactFourColorability}.  Rothe's proof
uses the reduction of Guruswami and Khanna, whose crucial feature for this
application is that the constructed graphs avoid chromatic number \(4\) in
one of the two cases
\cite{GuruswamiKhanna2000,Rothe2003ExactFourColorability}.

Riege and Rothe observed that an analogous way to close the remaining
exact domatic-number gap would be to find a reduction from a suitable
\NP-complete problem to the domatic-number problem whose output graphs
never have domatic number \(3\)
\cite{RiegeRothe2004ExactDNP}.
The
main reduction of the present paper supplies such a construction.  It maps a
formula \(\varphi\) to a graph \(R(\varphi)\) such that
\[
  \varphi\in\SAT
  \Longrightarrow
  \dom(R(\varphi))=4,
  \qquad
  \varphi\notin\SAT
  \Longrightarrow
  \dom(R(\varphi))=2.
\]
Thus the image of the reduction avoids domatic number \(3\).  This is the
technical ingredient that closes the open cases.

Our main contribution is therefore the following theorem.

\begin{theorem}[Main theorem]
\label{thm:main-result}
  \(\ExactDNP{3}\) and
  \(\ExactDNP{4}\) are \DP-complete.
\end{theorem}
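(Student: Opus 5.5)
Membership in \DP{} is immediate for both values. For fixed \(k\), \(\dom(G)=k\) holds if and only if \(\dom(G)\ge k\) and \(\dom(G)\not\ge k+1\). Since \(k\)-DNP and \((k+1)\)-DNP are in \NP, this writes \(\ExactDNP{k}\) as the intersection of an \NP{} set with a \coNP{} set.

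For hardness I would reduce from \(\SATUNSAT=\{(\varphi,\psi):\varphi\in\SAT,\ \psi\notin\SAT\}\), which is \DP-complete. The plan has three ingredients.

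\textbf{(i) The main reduction \(R\).} This is the construction announced in the introduction: \(\dom(R(\varphi))=4\) if \(\varphi\in\SAT\) and \(\dom(R(\varphi))=2\) otherwise.

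\textbf{(ii) A simpler reduction \(S\).} This one satisfies \(\dom(S(\psi))=3\) if \(\psi\in\SAT\) and \(\dom(S(\psi))=2\) otherwise. Such an \(S\) should come from the standard \NP-hardness proof of 3-DNP (via 3-colorability or a direct \SAT{} gadget). One must additionally check that the domatic number never exceeds 3, say by guaranteeing a vertex of degree 2, and never drops below 2, say by ensuring no isolated vertices.

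\textbf{(iii) A combining operation.} I would use the join \(G\oplus H\), where every vertex of \(G\) is joined to every vertex of \(H\). For graphs without isolated vertices I expect \(\dom(G\oplus H)\) to behave additively in the relevant range. Any dominating set of \(G\) dominates all of \(H\), and vice versa, so \(\dom(G)+\dom(H)\le\dom(G\oplus H)\). Conversely, a class of a domatic partition of the join that meets only \(G\) must dominate \(G\). Hence at most \(\dom(G)+\dom(H)\) classes are one-sided, but classes meeting both sides may be arbitrary.

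Because mixed classes break exact additivity, I would instead use a disjoint-union/minimum combination, \(\dom(G\cup H)=\min(\dom G,\dom H)\), together with Wagner-style case analysis. Take \(\ExactDNP{4}\) as the example. Map \((\varphi,\psi)\) to a graph whose domatic number is \(4\) exactly when \(\varphi\) is satisfiable and \(\psi\) is not. Start with \(R(\varphi)\). Then build from \(\psi\) a gadget \(T(\psi)\) with \(\dom(T(\psi))\ge 4\) when \(\psi\notin\SAT\) and a value that forces the total to be \(\ne 4\) when \(\psi\in\SAT\).

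Disjoint union can only take minima, so it cannot push the total above 4. Therefore the \(\psi\)-part must be attached by joins, which raise the domatic number. Concretely I would consider \(G=R(\varphi)\oplus K\) for a suitable clique \(K\), or \(R(\varphi)\cup(R(\psi)\oplus\ldots)\), and tune sizes so that the four combinations \((4\text{ or }2)\times(4\text{ or }2)\) land on the target value only in the case (sat, unsat). The same scheme with \(S\) in place of \(R\) handles \(\ExactDNP{3}\): there the two possible values 3 and 2 are adjacent, and I would exploit the gap from \(R\) on the other side.

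The main obstacle is step (i), the construction of \(R\) avoiding domatic number 3. Completeness is the easy direction: a satisfying assignment gives an explicit 4-partition. The hard part is soundness, showing that any 3-partition already yields a satisfying assignment, so that an unsatisfiable formula forces \(\dom=2\). I would first try variable gadgets containing degree-3 vertices whose closed neighbourhoods of size 4 rigidly force all classes, with clause gadgets whose degree-2 vertices cap the domatic number. The second delicate point is the combining lemma: controlling the mixed classes in joins, or else choosing a disjoint-union-based combination, so that the final values separate exactly as required.
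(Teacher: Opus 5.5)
Your membership argument and overall plan (reduce from \SATUNSAT{} using \(R\) and \(S\)) match the paper. However, the central step is missing: the reduction \(R\) with \(\dom(R(\varphi))\in\{2,4\}\) is never constructed. You cite it as ``the construction announced in the introduction,'' but that is \cref{thm:main-gap}, the paper's main technical result, and your gadget sketch cannot realize it. Degree-2 vertices in the clause gadgets would cap the domatic number at \(3\) by \cref{lem:degree-upper-bound}, so the satisfiable case could never reach \(4\). Conversely, \(\dom(R(\varphi))=4\) forces minimum degree at least \(3\), so every closed neighbourhood has at least four vertices. Rigidity of such neighbourhoods in a \(4\)-colouring says nothing about soundness, because in a hypothetical \(3\)-domatic colouring every closed neighbourhood has a spare slot. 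The missing idea is a mechanism that removes this slack.

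The paper puts four anchors \(a_1,\dots,a_4\) into one clique together with all literal vertices. By pigeonhole, any \(3\)-domatic colouring gives two anchors \(a,b\) the same colour \(T\). Since the pair is not known in advance, one SAT-checking copy is built for each of the six pairs \(P=\{a,b\}\). In copy \(P\), the clause helper \(q_{j,P}\) is adjacent exactly to \(a,b,c_{j,P}\), which also yields \(\dom\le 4\). It gets only one colour from its anchors, so \(q_{j,P}\) and \(c_{j,P}\) must take the two colours other than \(T\), and \(c_{j,P}\) must receive \(T\) from a literal neighbour. The helpers \(h^{\alpha,\beta}_{i,P}\), adjacent exactly to \(a,b,p^{\alpha}_{i,P},n^{\beta}_{i,P}\), forbid colour \(T\) on a positive and a negative literal vertex of the same variable at once. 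For completeness, the anchors get four distinct colours, and doubling each literal vertex lets a true literal supply both anchor colours of its copy to the clause vertex.

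The combination step is also left unresolved (``tune sizes''), and your diagnosis of it is off. Starting from an unshifted \(R(\varphi)\) with target \(4\) and a disjoint union cannot work. It would need a \(\psi\)-gadget whose domatic number is at least \(4\) exactly when \(\psi\notin\SAT\). That is a polynomial-time reduction from a \coNP-complete set to an \NP{} set, impossible unless \(\NP=\coNP\). The remedy is not a general join with the \(\psi\)-part, which, as you note, is not additive. It is to shift by cliques, where mixed classes are harmless: every class meeting \(K_t\) contains a distinct clique vertex, so at most \(t\) classes meet \(K_t\), and the others dominate \(G\) on their own. This gives \(\dom(G\oplus K_t)=\dom(G)+t\) exactly (\cref{lem:join-clique}).

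With that lemma, place the target strictly between the two values of the \(\varphi\)-side and exactly at the unsatisfiable value of the \(\psi\)-side. The paper takes \(R(\varphi)\dot\cup(S(\psi)\oplus K_1)\). Its domatic number is the minimum of a value in \(\{2,4\}\) and a value in \(\{3,4\}\), hence equals \(3\) exactly when \(\varphi\in\SAT\) and \(\psi\notin\SAT\). The map \(G\mapsto G\oplus K_1\) then reduces \(\ExactDNP{3}\) to \(\ExactDNP{4}\). Your own candidate \(R(\varphi)\dot\cup(R(\psi)\oplus K_1)\) would also work, so the merely asserted \(S\) is dispensable once \(R\) exists.
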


Together with the results of Riege and Rothe, this yields the following
complete classification for all fixed exact values at least three.

\begin{corollary}
\label[corollary]{cor:exact-k-dnp-classification}
  For every fixed integer \(k\geq 3\),
  \(\ExactDNP{k}\) is \DP-complete.
\end{corollary}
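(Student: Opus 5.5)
The statement is \Cref{cor:exact-k-dnp-classification}, and it needs no new construction: it follows by combining \Cref{thm:main-result} with the earlier classification of Riege and Rothe. I would split on the value of \(k\).

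\textbf{Case \(k\in\{3,4\}\).} Here \DP-completeness of \(\ExactDNP{k}\) is exactly the content of \Cref{thm:main-result}. That theorem rests on two reductions: the \(4\)-versus-\(2\) reduction \(R\) described above, which never outputs a graph of domatic number \(3\), and a simpler \(3\)-versus-\(2\) reduction.

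\textbf{Case \(k\ge 5\).} Here \DP-completeness is the theorem of Riege and Rothe \cite{RiegeRothe2004ExactDNP}, which we take as given.

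Since every integer \(k\ge 3\) falls into one of these two cases, the corollary follows.

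\textbf{Membership for \(k\in\{3,4\}\).} For completeness, the proof of \Cref{thm:main-result} would verify membership in \DP\ directly. We have
\[
\ExactDNP{k}=\{G:\dom(G)\ge k\}\cap\{G:\dom(G)<k+1\}.
\]
The first set is \(k\)-DNP, which is in \NP. The second set is the complement of \((k+1)\)-DNP, so it lies in \coNP. Hence \(\ExactDNP{k}\in\DP\).

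\textbf{Hardness for \(k\in\{3,4\}\).} I would reduce from \(\SATUNSAT=\{(\varphi,\psi):\varphi\in\SAT,\ \psi\notin\SAT\}\), which is \DP-complete. Given \((\varphi,\psi)\), I combine one graph built from \(\varphi\) and one built from \(\psi\) via the disjoint union. This works because the domatic number of a disjoint union is the minimum of the domatic numbers of the components.

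For \(k=4\), the natural idea is to take \(G_\varphi\) with \(\dom=4\) or \(2\) according to whether \(\varphi\) is satisfiable, and \(G_\psi\) with \(\dom=5\) or \(4\) according to whether \(\psi\) is satisfiable. Then \(\dom(G_\varphi\cup G_\psi)=4\) holds iff \(\varphi\in\SAT\) and \(\psi\notin\SAT\). Obtaining the \(5\)-versus-\(4\) gadget from the \(4\)-versus-\(2\) reduction requires a join or padding operation, which shifts domatic number in a controlled way. Wagner's criterion is a cleaner alternative. It only requires a map \(\varphi_1,\dots,\varphi_{2m}\mapsto G\) with the following property: if \(\varphi_{j+1}\in\SAT\Rightarrow\varphi_j\in\SAT\) for all \(j\), then \(\dom(G)=k\) exactly when the number of satisfiable formulas is odd. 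The case \(k=3\) is handled the same way, using the \(3\)-versus-\(2\) reduction.

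\textbf{Main obstacle.} The real difficulty lies entirely inside \Cref{thm:main-result}. It is the soundness of the gap reduction \(R\): any domatic \(3\)-partition of \(R(\varphi)\) must force a consistent satisfying assignment, so that \(\dom=3\) never occurs. For the corollary itself there is no obstacle beyond this case split.
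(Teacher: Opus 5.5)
Your case split is correct and is exactly the paper's proof of the corollary: \(k\in\{3,4\}\) comes from the main theorem, and \(k\ge 5\) is cited from Riege and Rothe. The paper's closing remark adds that \(k\ge 5\) can also be recovered from \(\ExactDNP{3}\) via \(G\mapsto G\oplus K_{k-3}\).

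Your supplementary hardness sketch, however, does not work as written, although the corollary does not depend on it.

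\textbf{The case \(k=4\).} You take a \(\varphi\)-gadget of domatic number \(4\) or \(2\) and a \(\psi\)-gadget of domatic number \(5\) or \(4\). When \(\varphi,\psi\in\SAT\), the disjoint union has domatic number \(\min\{4,5\}=4\). So it lies in \(\ExactDNP{4}\) even though \((\varphi,\psi)\notin\SATUNSAT\). For the minimum to isolate the target \(k\), the \(\varphi\)-gadget must skip over \(k\): it must be strictly above \(k\) when \(\varphi\) is satisfiable and strictly below \(k\) otherwise.

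\textbf{The case \(k=3\).} Your remark that \(k=3\) is handled ``the same way, using the 3-versus-2 reduction'' gets the roles backwards. Any \(\varphi\)-gadget whose satisfiable value equals the target, such as \(S(\varphi)\), fails in the same both-satisfiable case, since \(\min\{3,4\}=3\). At target \(3\), the \(\varphi\)-gadget must be \(R(\varphi)\) itself, with values \(4\) or \(2\). This is precisely where the fact that \(R\) never outputs domatic number \(3\) is used. The paper pairs it with \(S(\psi)\oplus K_1\), which has values \(4\) or \(3\), giving \(H_3=R(\varphi)\dot\cup(S(\psi)\oplus K_1)\).

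\textbf{How \(k=4\) is actually obtained.} The paper gets \(k=4\) from \(k=3\) by \(G\mapsto G\oplus K_1\). Alternatively, one can directly pair \(R(\varphi)\oplus K_1\) (values \(5\) or \(3\)) with \(S(\psi)\oplus K_2\) (values \(5\) or \(4\)). Note that a \(5\)-versus-\(4\) gadget cannot be obtained from \(R\) by joining a clique, because that shifts both of its values equally.

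\textbf{Wagner's criterion.} Switching to Wagner's criterion does not by itself repair this. Its monotonicity hypothesis still allows both formulas to be satisfiable, and that is exactly the case that fails with your gadgets.
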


The proof proceeds through two direct SAT-to-domatic-number reductions.
The first reduction maps a formula \(\varphi\) to a graph \(S(\varphi)\) with
\[
  \varphi\in\SAT
  \Longrightarrow
  \dom(S(\varphi))=3,
  \qquad
  \varphi\notin\SAT
  \Longrightarrow
  \dom(S(\varphi))=2.
\]
The second, stronger reduction is the avoidance reduction described above:
it maps \(\varphi\) to \(R(\varphi)\) with
\[
  \varphi\in\SAT
  \Longrightarrow
  \dom(R(\varphi))=4,
  \qquad
  \varphi\notin\SAT
  \Longrightarrow
  \dom(R(\varphi))=2.
\]
The technical sections establish these two SAT-to-DNP reductions directly.  The final
application section combines them with the canonical \DP-complete problem
\(\SATUNSAT\) \cite{PapadimitriouYannakakis1984}, where
\[
  \SATUNSAT
  =
  \{(\varphi,\psi):\varphi\in\SAT
  \text{ and } \psi\notin\SAT\},
\]
and with elementary graph operations, namely disjoint union and join.

\newpage

\subsection{Related Work}

The domatic number problem belongs to the broader theory of domination in
graphs, whose origins include the work of Cockayne and Hedetniemi
\cite{CockayneHedetniemi1977}.  The classical NP-completeness background is
given in the compendium of Garey and Johnson, which lists the domatic number
problem and attributes the reduction from \(\SAT\) to unpublished work of
Garey, Johnson, and Tarjan \cite{GareyJohnson1979}.  Kaplan and Shamir gave
an explicit reduction from graph colorability to the domatic number problem
with useful structural properties; this reduction is also used as a point of
departure in the work of Riege and Rothe
\cite{KaplanShamir1994,RiegeRothe2004ExactDNP}.

The complexity-theoretic setting of this paper is the class \DP{} and the boolean
hierarchy over NP.  Papadimitriou and Yannakakis introduced DP in their study
of problems that combine NP-type and coNP-type requirements, including exact
and critical problems \cite{PapadimitriouYannakakis1984}.  Cai et al.
introduced and studied the boolean hierarchy over NP
\cite{CaiEtAl1988,CaiEtAl1989}.  Wagner developed general sufficient
conditions for proving hardness in the levels of this hierarchy and applied
them to exact optimization problems, including exact graph colorability
\cite{Wagner1987}.

Rothe's \DP-completeness result for \(\mathrm{Exact}\text{-}4\text{-Colorability}\) solved the
analogous small-threshold question for chromatic number
\cite{Rothe2003ExactFourColorability,RotheBook}.  The proof relies on the
Guruswami--Khanna reduction, originally developed in connection with the
hardness of coloring 3-colorable graphs with four colors
\cite{GuruswamiKhanna2000,Rothe2003ExactFourColorability,RiegeRothe2006Survey}.
In the exact-colorability application, the relevant point is that this
reduction produces only the two values \(3\) and \(5\), and hence avoids the
troublesome intermediate exact value \(4\).

Riege and Rothe proved the central previously known results for exact
domatic number: \DP-completeness of Exact-\(i\)-DNP for \(i\geq 5\),
\coNP-completeness of Exact-\(2\)-DNP, and corresponding completeness
results for higher levels of the boolean hierarchy and for related generalized
domination problems \cite{RiegeRothe2004ExactDNP}.  They also identified
a sufficient route to the remaining cases: it would be enough to construct a
reduction from a suitable \NP-complete problem to the domatic-number problem
whose outputs have domatic number different from \(3\)
\cite{RiegeRothe2004ExactDNP}.
The same gap is also discussed in the survey article by Riege and
Rothe~\cite{RiegeRothe2006Survey} and in Rothe's textbook~\cite{RotheBook};
for a detailed thesis-level treatment, see Riege~\cite{Riege2006Thesis}.

\section{Preliminaries}
\label{sec:preliminaries}

All graphs in this paper are finite, simple, undirected graphs.  For a graph
\(G\), we write \(V(G)\) and \(E(G)\) for its vertex set and edge set.  For
\(v\in V(G)\), let
\[
  N_G(v)=\{u\in V(G): uv\in E(G)\}
\]
be the open neighborhood of \(v\), and let
\[
  N_G[v]=N_G(v)\cup\{v\}
\]
be the closed neighborhood of \(v\).  When the graph is clear from context,
we write \(N(v)\) and \(N[v]\).  The degree of \(v\) is denoted
\(\deg_G(v)\), or simply \(\deg(v)\), and
\[
  \min\deg(G)=\min_{v\in V(G)} \deg_G(v).
\]

A set \(D\subseteq V(G)\) is a dominating set of \(G\) if
\[
  N[v]\cap D\neq\emptyset
  \qquad\text{for every }v\in V(G).
\]
Equivalently, every vertex is either in \(D\) or adjacent to a vertex in
\(D\).  The domatic number of \(G\) is
\[
  \dom(G)
  =
  \max\{k : V(G)\text{ can be partitioned into }k
  \text{ dominating sets}\}.
\]
A partition of \(V(G)\) into \(k\) dominating sets is called a
\(k\)-domatic partition.

It is often useful to view a \(k\)-domatic partition as a coloring.  A map
\[
  c:V(G)\to\{1,\ldots,k\}
\]
is a \(k\)-domatic coloring if every color class
\(c^{-1}(i)\) is a dominating set.  Equivalently,
\[
  c(N[v])=\{1,\ldots,k\}
  \qquad\text{for every }v\in V(G).
\]
Thus every closed neighborhood must see all \(k\) colors.  This is not a
proper graph coloring: adjacent vertices may receive the same color.

We will use the following simple monotonicity observation throughout.  If
\(G\) has a \(k\)-domatic partition and \(1\leq \ell\leq k\), then \(G\) has
an \(\ell\)-domatic partition.  Indeed, in the partition view one obtains an
\(\ell\)-partition by repeatedly merging two parts; supersets of dominating
sets are again dominating.  Equivalently, in the coloring view one obtains
an \(\ell\)-domatic coloring by identifying colors.  Every closed
neighborhood that saw all old colors still sees all remaining colors after
such identifications.

We write \(\SAT\) for the standard NP-complete language of satisfiable
Boolean formulas in conjunctive normal form with exactly three literal
occurrences per clause.  Thus, for such a formula \(\varphi\), the assertion
\(\varphi\in\SAT\) means that \(\varphi\) has a satisfying truth assignment.

For fixed \(k\geq 1\), define
\[
  k\text{-DNP}
  =
  \{G : \dom(G)\geq k\},
\]
and define the exact version by
\[
  \ExactDNP{k}
  =
  \{G : \dom(G)=k\}.
\]

We recall the complexity class \DP{} in the form needed here.  For classes
\(\mathcal C\) and \(\mathcal D\), their complex intersection is
\[
  \mathcal C\wedge \mathcal D
  =
  \{A\cap B : A\in\mathcal C,\ B\in\mathcal D\}.
\]
Then
\[
  \DP
  =
  \NP\wedge \coNP.
\]
Equivalently, a language is in DP if it can be written as the intersection
of a language in NP and a language in coNP
\cite{PapadimitriouYannakakis1984,CaiEtAl1988,CaiEtAl1989}.

The following membership observation is standard.  It is the domatic-number
instance of the usual \DP{} upper bound for exact optimization problems: one asks
simultaneously for a lower bound and for the failure of the next larger lower
bound.  We include the proof for completeness and to fix notation; see also
\cite{PapadimitriouYannakakis1984,RiegeRothe2004ExactDNP}.

\begin{lemma}[Standard membership]
\label[lemma]{lem:exact-dnp-in-dp}
  For every fixed \(k\geq 1\),
  \[
    \ExactDNP{k}\in \DP.
  \]
\end{lemma}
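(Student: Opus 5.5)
The plan is to write \(\ExactDNP{k}\) as the intersection of an \NP{} language and a \coNP{} language, namely
\[
  \ExactDNP{k} \;=\; k\text{-DNP} \;\cap\; \overline{(k+1)\text{-DNP}},
\]
where the complement is taken within the set of (encodings of) graphs. Since \(\DP=\NP\wedge\coNP\), this suffices once we show that \(\ell\text{-DNP}\in\NP\) for every fixed \(\ell\ge 1\).

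First, verify the set identity. We have \(\dom(G)=k\) if and only if \(\dom(G)\ge k\) and \(\dom(G)\not\ge k+1\). Here \(\dom(G)\) is the maximum \(\ell\) for which an \(\ell\)-domatic partition exists. By the monotonicity observation in \cref{sec:preliminaries}, the set of such \(\ell\) is downward closed, so \(\dom(G)\ge \ell\) holds exactly when \(G\) has an \(\ell\)-domatic partition.

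Second, show \(\ell\text{-DNP}\in\NP\). The certificate is a map \(c:V(G)\to\{1,\ldots,\ell\}\). The verifier checks, for every vertex \(v\), that \(c(N[v])=\{1,\ldots,\ell\}\), which takes polynomial time. There is one subtlety: a partition into exactly \(\ell\) parts needs nonempty parts. This holds automatically, because if \(V(G)\neq\emptyset\) then any vertex's closed neighborhood sees all \(\ell\) colors, so every color class is nonempty. The degenerate empty graph can be handled by a fixed convention and does not affect membership.

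Third, note that graphs are recognizable in polynomial time, so taking the complement relative to all strings rather than relative to graphs still yields a \coNP{} set. We can then intersect with the polynomial-time set of valid graph encodings, which stays inside \NP{} on the \(k\)-DNP side. This gives \(A\in\NP\) and \(B\in\coNP\) with \(\ExactDNP{k}=A\cap B\).

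There is no real obstacle here. The only points needing care are the monotonicity step, which the paper has already established, and the encoding and nonemptiness bookkeeping.
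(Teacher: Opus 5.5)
Your proposal is correct and follows the same route as the paper: you write \(\ExactDNP{k}=k\text{-DNP}\cap\overline{(k+1)\text{-DNP}}\) and show \(\ell\)-DNP \(\in\NP\) by guessing a coloring and checking every closed neighborhood. Your extra bookkeeping on monotonicity, nonempty color classes, and encodings is sound and only makes explicit what the paper's proof leaves implicit.
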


\begin{proof}
  For fixed \(k\), the language \(k\)-DNP is in NP: one guesses a partition
  of \(V(G)\) into \(k\) parts and verifies in polynomial time that every
  part is a dominating set.  Hence \((k+1)\)-DNP is also in NP, and its
  complement is in coNP.

  Now
  \[
    \dom(G)=k
    \Longleftrightarrow
    \dom(G)\geq k
    \ \wedge\
    \neg(\dom(G)\geq k+1).
  \]
  Therefore
  \[
    \ExactDNP{k}
    =
    k\text{-DNP}
    \cap
    \overline{(k+1)\text{-DNP}},
  \]
  which is the intersection of an NP language and a coNP language.  Thus
  \(\ExactDNP{k}\in\DP\).
\end{proof}

The following elementary facts about domatic partitions will be used
repeatedly.  They are standard; we include the short proofs to make the paper
self-contained.  We prove these facts in the
coloring view of domatic partitions.

For disjoint graphs \(G\) and \(H\), their disjoint union is denoted
\(G\dot\cup H\).  Thus
\[
  V(G\dot\cup H)=V(G)\cup V(H),
  \qquad
  E(G\dot\cup H)=E(G)\cup E(H).
\]
Their join is denoted \(G\oplus H\).  It is obtained from \(G\dot\cup H\)
by adding every edge between \(V(G)\) and \(V(H)\).  We write \(K_t\) for the
clique on \(t\) vertices, with the convention that \(K_0\) is the empty
graph.

\begin{lemma}[Disjoint unions]
\label[lemma]{lem:disjoint-union}
  For all disjoint graphs \(G\) and \(H\),
  \[
    \dom(G\dot\cup H)=\min\{\dom(G),\dom(H)\}.
  \]
\end{lemma}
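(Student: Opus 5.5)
We prove the two inequalities separately, working in the coloring view of domatic partitions. The key observation is that closed neighborhoods in \(G\dot\cup H\) never cross between the components: for \(v\in V(G)\) we have \(N_{G\dot\cup H}[v]=N_G[v]\), and symmetrically for \(v\in V(H)\).

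For the lower bound, let \(m=\min\{\dom(G),\dom(H)\}\). Both \(G\) and \(H\) have domatic colorings with at least \(m\) colors. By the monotonicity observation from the preliminaries, we may identify colors so that each has an \(m\)-domatic coloring: \(c_G\) on \(G\) and \(c_H\) on \(H\). We then define \(c\) on \(V(G)\cup V(H)\) by \(c=c_G\) on \(V(G)\) and \(c=c_H\) on \(V(H)\). Every closed neighborhood of a vertex of \(G\) is \(N_G[v]\), so \(c(N[v])=c_G(N_G[v])=\{1,\ldots,m\}\), and the same holds for vertices of \(H\). Hence \(c\) is an \(m\)-domatic coloring, and \(\dom(G\dot\cup H)\ge m\).

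For the upper bound, take a \(k\)-domatic coloring \(c\) of \(G\dot\cup H\) with \(k=\dom(G\dot\cup H)\). Its restriction to \(V(G)\) still satisfies \(c(N_G[v])=\{1,\ldots,k\}\) for every \(v\in V(G)\), because \(N_G[v]\) is exactly the closed neighborhood of \(v\) in the union. So \(c|_{V(G)}\) is a \(k\)-domatic coloring of \(G\), and \(\dom(G)\ge k\). By symmetry \(\dom(H)\ge k\), so \(k\le m\).

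The only subtlety is the degenerate case where \(G\) or \(H\) is the empty graph \(K_0\). Then the domatic number is vacuous: it is infinite, or it is bounded only by convention. We therefore state the lemma for nonempty graphs, or adopt the convention that \(\dom(K_0)=\infty\) so that the formula still holds. Otherwise the argument is routine; the only point requiring care is the locality of closed neighborhoods.
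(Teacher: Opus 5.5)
Your proof is correct and follows the paper's argument: you get the upper bound by restricting a \(\dom(G\dot\cup H)\)-domatic coloring to each component, and the lower bound by combining \(m\)-domatic colorings obtained via monotonicity, both resting on \(N_{G\dot\cup H}[v]=N_G[v]\). Your caveat about the empty graph is a reasonable addition; the paper leaves it implicit because it applies the lemma only to nonempty graphs.
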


\begin{proof}
  Let \(r=\dom(G\dot\cup H)\), and let
  \(c:V(G\dot\cup H)\to\{1,\ldots,r\}\) be an \(r\)-domatic coloring.
  Since there are no edges between \(G\) and \(H\), every closed
  neighborhood of a vertex of \(G\) in \(G\dot\cup H\) is just its closed
  neighborhood in \(G\).  Hence the restriction of \(c\) to \(V(G)\) is an
  \(r\)-domatic coloring of \(G\).  Similarly, the restriction of \(c\) to
  \(V(H)\) is an \(r\)-domatic coloring of \(H\).  Thus
  \[
    r\leq \dom(G)
    \qquad\text{and}\qquad
    r\leq \dom(H),
  \]
  so
  \[
    \dom(G\dot\cup H)\leq \min\{\dom(G),\dom(H)\}.
  \]

  Conversely, let \(d=\min\{\dom(G),\dom(H)\}\).  By monotonicity, \(G\)
and \(H\) have \(d\)-domatic colorings with the same color set
\(\{1,\ldots,d\}\).  Color \(G\dot\cup H\) by using these two colorings
on the two components.  If \(v\in V(G)\), then
\[
  N_{G\dot\cup H}[v]=N_G[v],
\]
and hence \(N_{G\dot\cup H}[v]\) sees all \(d\) colors.  Similarly, if
\(v\in V(H)\), then
\[
  N_{G\dot\cup H}[v]=N_H[v],
\]
and again the closed neighborhood sees all \(d\) colors.  Thus the
combined coloring is a \(d\)-domatic coloring of \(G\dot\cup H\), and so
\[
  \dom(G\dot\cup H)\geq d.
\]
Together with the already proved inequality
\[
  \dom(G\dot\cup H)\leq \min\{\dom(G),\dom(H)\}=d,
\]
the equality follows.
\end{proof}

\begin{lemma}[Joining a clique]
\label[lemma]{lem:join-clique}
  For every graph \(G\) and every \(t\geq 0\),
  \[
    \dom(G\oplus K_t)=\dom(G)+t.
  \]
\end{lemma}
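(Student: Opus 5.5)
I would first reduce to the case \(t=1\) and then induct on \(t\). The case \(t=0\) is trivial because \(G\oplus K_0=G\). Moreover \(K_t=K_{t-1}\oplus K_1\), and the join is associative, so \(G\oplus K_t=(G\oplus K_{t-1})\oplus K_1\). It therefore suffices to prove \(\dom(H\oplus K_1)=\dom(H)+1\) for every graph \(H\). Write \(w\) for the new universal vertex and \(H'=H\oplus K_1\). I would also tacitly assume \(G\) is nonempty, or note that both sides equal \(t\) when \(G\) is empty.

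For the lower bound, take a \(d\)-domatic coloring \(c\) of \(H\) with \(d=\dom(H)\), and give \(w\) the fresh color \(d+1\). Consider a vertex \(v\in V(H)\). Its closed neighborhood \(N_{H'}[v]=N_H[v]\cup\{w\}\) sees colors \(1,\ldots,d\) through \(N_H[v]\) and sees color \(d+1\) through \(w\). The vertex \(w\) itself has \(N_{H'}[w]=V(H')\), so it sees every color, since each color class of \(c\) is nonempty. (A nonempty dominating set is needed here, which holds when \(H\) is nonempty.) Hence \(\dom(H')\ge d+1\).

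For the upper bound, let \(c'\) be an \(r\)-domatic coloring of \(H'\), and suppose without loss of generality that \(c'(w)=r\). The key step is to recolor, and I expect it to be the only real obstacle. Vertices of \(H\) may also carry color \(r\), and some vertices of \(H\) may rely on \(w\) to see color \(r\). So I would define \(c\) on \(V(H)\) by replacing color \(r\) with color \(1\), that is, merging class \(r\cap V(H)\) into class \(1\), which gives \(r-1\) colors.

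Now take any \(v\in V(H)\). Its neighborhood \(N_{H'}[v]\) sees all \(r\) colors. Every color \(i\le r-1\) is not the color of \(w\), so it must be witnessed inside \(N_H[v]\). After merging, \(N_H[v]\) still sees each color \(1,\ldots,r-1\), because merging can only keep or add color \(1\). Thus \(c\) is an \((r-1)\)-domatic coloring of \(H\), which gives \(\dom(H)\ge r-1\). Taking \(r=\dom(H')\) yields the upper bound. (If \(r=1\), the bound \(\dom(H)\ge 0\) is trivial.) Combining the two bounds and inducting on \(t\) gives the statement.
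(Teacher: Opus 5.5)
Your proof is correct and uses the same two ideas as the paper: fresh colors on the clique vertices for the lower bound, and, for the upper bound, recoloring the vertices of $G$ that carry a clique color with a color absent from the clique, which every closed neighborhood in $G$ must already contain within $G$. The only difference is organizational: you reduce to $t=1$ by induction via $G\oplus K_t=(G\oplus K_{t-1})\oplus K_1$, whereas the paper handles all $t$ at once by splitting the colors into the set $T$ appearing on $K_t$ (with $|T|\le t$) and the remaining set $S$, where the case $S=\emptyset$ corresponds to your case $r=1$.
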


\begin{proof}
  The case \(t=0\) is immediate, so assume \(t\geq 1\).  Let
  \(d=\dom(G)\), and let
  \(c:V(G)\to\{1,\ldots,d\}\) be a \(d\)-domatic coloring of \(G\).  Write
  \(V(K_t)=\{x_1,\ldots,x_t\}\).  Extend \(c\) to \(G\oplus K_t\) by giving
  \(x_j\) the new color \(d+j\), for \(1\leq j\leq t\).

  Every vertex of \(G\) sees all colors \(1,\ldots,d\) inside its closed
  neighborhood in \(G\), and it sees all new colors \(d+1,\ldots,d+t\)
  because it is adjacent to every vertex of \(K_t\).  Every vertex of
  \(K_t\) has the whole graph \(G\oplus K_t\) as its closed neighborhood.
  Hence this is a \((d+t)\)-domatic coloring, and so
  \[
    \dom(G\oplus K_t)\geq \dom(G)+t.
  \]

For the reverse inequality, let
\(r=\dom(G\oplus K_t)\), and let
\[
  c:V(G\oplus K_t)\to\{1,\ldots,r\}
\]
be an \(r\)-domatic coloring of \(G\oplus K_t\).  Thus
\(\{1,\ldots,r\}\) is the set of colors used by \(c\).  Let \(T\) be the
set of colors that appear on the clique \(K_t\).  Then \(|T|\leq t\).  Let
\[
  S=\{1,\ldots,r\}\setminus T
\]
be the set of remaining colors, that is, the colors absent from \(K_t\).

If \(S=\emptyset\), then all \(r\) colors appear on the \(t\)-vertex clique
\(K_t\).  Hence
\[
  r=|T|\leq t\leq t+\dom(G),
\]
so the desired upper bound is immediate.  Thus we may assume
\(S\neq\emptyset\).
We construct an
\(|S|\)-domatic coloring of \(G\).  Keep all vertices of \(G\) whose color
lies in \(S\) unchanged, and recolor every vertex of \(G\) whose color lies
in \(T\) arbitrarily with one fixed color from \(S\).

We claim that the resulting coloring of \(G\) is \(|S|\)-domatic.  Fix
\(v\in V(G)\) and \(\alpha\in S\).  Since \(c\) is a domatic coloring of
\(G\oplus K_t\), the closed neighborhood of \(v\) in \(G\oplus K_t\)
contains a vertex of color \(\alpha\).  But \(\alpha\notin T\), so color
\(\alpha\) does not occur on \(K_t\).  Hence this vertex of color
\(\alpha\) must already lie in \(N_G[v]\).  It is not recolored, since its
color lies in \(S\).  Thus, after the recoloring, \(N_G[v]\) still sees
color \(\alpha\).  Since this holds for every \(v\in V(G)\) and every
\(\alpha\in S\), we obtain an \(|S|\)-domatic coloring of \(G\).  Therefore
\[
  |S|\leq \dom(G).
\]
Since \(r=|T|+|S|\) and \(|T|\leq t\), we have
\[
  r\leq t+\dom(G).
\]
Recalling that \(r=\dom(G\oplus K_t)\), this proves
\[
  \dom(G\oplus K_t)\leq \dom(G)+t.
\]
Together with the opposite inequality proved above, this gives
\[
  \dom(G\oplus K_t)=\dom(G)+t.
\]
\end{proof}

\begin{lemma}[Degree upper bound]
\label[lemma]{lem:degree-upper-bound}
  For every graph \(G\),
  \[
    \dom(G)\leq \min\deg(G)+1.
  \]
\end{lemma}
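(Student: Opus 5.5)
The plan is to argue in the coloring view of domatic partitions introduced above. Let \(r=\dom(G)\) and fix an \(r\)-domatic coloring \(c:V(G)\to\{1,\ldots,r\}\). Then choose a vertex \(v\) of minimum degree, so \(\deg(v)=\min\deg(G)\). (We may assume \(V(G)\neq\emptyset\), since otherwise \(\min\deg(G)\) is not defined.)

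The key step is a counting argument at \(v\). Since \(c\) is \(r\)-domatic, we have \(c(N[v])=\{1,\ldots,r\}\). The closed neighborhood \(N[v]\) consists of \(v\) and its \(\deg(v)\) neighbours, so it contains exactly \(\deg(v)+1\) vertices. A set of \(\deg(v)+1\) vertices can carry at most \(\deg(v)+1\) distinct colors, which gives
\[
  r=|c(N[v])|\leq |N[v]|=\min\deg(G)+1 .
\]
This is exactly the claimed bound.

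The argument has no real obstacle. The only point needing care is the convention for degenerate cases. If \(G\) has an isolated vertex, the bound reads \(\dom(G)\leq 1\), which is consistent with the earlier remark that \(\dom(G)=1\) in that case. Each part of the partition must be nonempty for the colors to all be used, but this is automatic once every closed neighborhood sees all \(r\) colors.
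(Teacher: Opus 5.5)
Your proposal is correct and is essentially the paper's proof: fix a \(\dom(G)\)-domatic coloring, note that the closed neighborhood of a minimum-degree vertex has \(\min\deg(G)+1\) vertices yet must see all colors, and conclude the bound. Your remark about assuming \(V(G)\neq\emptyset\) is a harmless extra precaution that the paper leaves implicit.
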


\begin{proof}
  Let \(k=\dom(G)\), and let \(c\) be a \(k\)-domatic coloring of \(G\).
  For every vertex \(v\in V(G)\), the closed neighborhood \(N[v]\) must see
  all \(k\) colors.  Therefore
  \[
    k\leq |N[v]|=\deg(v)+1.
  \]
  Taking \(v\) of minimum degree gives
  \[
    \dom(G)=k\leq \min\deg(G)+1.
  \]
\end{proof}

\begin{lemma}[Graphs without isolated vertices]
\label[lemma]{lem:no-isolated-vertices}
  If \(G\) has no isolated vertices, then
  \[
    \dom(G)\geq 2.
  \]
\end{lemma}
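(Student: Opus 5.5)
The plan is to exhibit an explicit $2$-domatic coloring using a spanning forest. First, I would reduce to the connected case: each component of \(G\) has no isolated vertices, and by \cref{lem:disjoint-union} (applied repeatedly), \(\dom(G)\) is the minimum of the domatic numbers of the components. So it suffices to show that every connected graph with at least two vertices has a \(2\)-domatic coloring. Alternatively, I could work directly on a spanning forest of \(G\), every tree of which has at least two vertices.

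Fix a connected component \(C\) with \(|V(C)|\ge 2\) and choose a spanning tree \(T\) of \(C\) together with a root \(r\). Define \(c(v)=1\) if the distance \(d_T(r,v)\) is even, and \(c(v)=2\) if it is odd. This is simply the bipartition of \(T\).

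I then verify that every closed neighborhood sees both colors. For any vertex \(v\), since \(T\) is a tree with at least two vertices, \(v\) has at least one neighbor \(u\) in \(T\). This holds for the root as well, because \(T\) has an edge at \(r\). The tree distances of \(u\) and \(v\) from \(r\) differ by exactly one, so \(c(u)\neq c(v)\). Since \(u\in N_T(v)\subseteq N_G(v)\), we get \(c(N_G[v])=\{1,2\}\). Hence \(c\) is a \(2\)-domatic coloring of \(C\), and combining the components, either via \cref{lem:disjoint-union} or directly, gives \(\dom(G)\ge 2\).

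There is no real obstacle here. The only point needing care is the hypothesis: without isolated vertices, every component has at least two vertices, so every vertex has a tree neighbor. An edgeless or empty graph is excluded or trivial; if \(G\) is empty, the statement is vacuous or conventional.
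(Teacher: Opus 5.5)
Your proof is correct, but it takes a different route from the paper. The paper does not decompose anything. It takes a maximal independent set \(I\) of \(G\), colors \(I\) with \(1\) and \(V(G)\setminus I\) with \(2\), and checks the two cases:
\begin{itemize}
\item maximality of \(I\) makes every vertex outside \(I\) see color \(1\);
\item independence of \(I\), together with the absence of isolated vertices, makes every vertex of \(I\) see color \(2\).
\end{itemize}
You instead use the parity bipartition of a rooted spanning tree in each component, so that every vertex has a tree neighbor of the opposite color. Both are standard arguments, and both give a spanning bipartite subgraph without isolated vertices whose sides are the two color classes.

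The paper's version is more economical. It needs no reduction to components or appeal to the disjoint-union lemma, no rooting, and no fact about distance parity in trees. It also gives a little more: one of the two dominating sets can be taken to be independent in \(G\), and any greedily chosen maximal independent set works. In your coloring, each class is independent only in the spanning forest, not in \(G\). As you noted, the component reduction is optional, since the parity coloring on a spanning forest works directly. Your argument is complete as written.
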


\begin{proof}
  Let \(I\) be a maximal independent set of \(G\).  Define a coloring
  \(c:V(G)\to\{1,2\}\) by coloring the vertices of \(I\) with color \(1\)
  and the vertices of \(V(G)\setminus I\) with color \(2\).

  We show that every closed neighborhood sees both colors.  If \(v\in I\),
  then \(v\) itself has color \(1\).  Since \(G\) has no isolated vertices,
  \(v\) has a neighbor; and since \(I\) is independent, every neighbor of
  \(v\) lies in \(V(G)\setminus I\), so \(v\) also sees color \(2\).

  If \(v\in V(G)\setminus I\), then \(v\) itself has color \(2\).  By
  maximality of \(I\), the vertex \(v\) has a neighbor in \(I\), and
  therefore \(v\) also sees color \(1\).  Hence \(c\) is a \(2\)-domatic
  coloring of \(G\), and so \(\dom(G)\geq 2\).
\end{proof}

\section{A Three-versus-Two SAT-to-DNP Reduction}
\label{sec:warm-up-reduction}

We first present a direct reduction from \SAT{} to the domatic
number problem.  This reduction is not intended to be edge-minimal.  Its
purpose is to isolate the local mechanisms that will also be used in the
main construction.  A large clique makes the anchor and literal vertices
easy to dominate once all three colors occur in the clique.  The actual
logical constraints are imposed by small degree-two helper vertices.  In
the coloring view of domatic partitions, a \(3\)-domatic coloring is a
coloring of the vertices with three colors such that every closed
neighborhood sees all three colors.

\begin{theorem}[Warm-up three-versus-two reduction]
\label{thm:warm-up-gap}
There is a polynomial-time computable function \(S\) that maps every
\SAT{} instance \(\varphi\) to a graph
\[
  G_\varphi := S(\varphi)
\]
such that
\[
  \varphi \in \SAT
  \quad\Longrightarrow\quad
  \dom(G_\varphi)=3,
\]
and
\[
  \varphi \notin \SAT
  \quad\Longrightarrow\quad
  \dom(G_\varphi)=2.
\]
Equivalently,
\[
  \dom(S(\varphi))
  =
  \begin{cases}
    3, & \text{if } \varphi \in \SAT,\\
    2, & \text{if } \varphi \notin \SAT.
  \end{cases}
\]
\end{theorem}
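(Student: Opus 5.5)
We will give an explicit graph built around one large clique, and verify three things. First, \(\dom(G_\varphi)\le 3\) always. Second, \(\dom(G_\varphi)\ge 2\) always. Third, a \(3\)-domatic coloring exists if and only if \(\varphi\) is satisfiable. The two outer bounds are cheap. The graph will contain degree-two vertices, so \cref{lem:degree-upper-bound} gives \(\dom(G_\varphi)\le 3\). It will have no isolated vertices, so \cref{lem:no-isolated-vertices} gives \(\dom(G_\varphi)\ge 2\). Hence everything reduces to the equivalence ``\(\varphi\in\SAT\) iff \(G_\varphi\) has a \(3\)-domatic coloring''.

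\textbf{Construction.} Let \(\varphi\) have variables \(x_1,\dots,x_n\) and clauses \(C_1,\dots,C_m\).
\begin{itemize}
\item Take a clique \(K\) on two anchor vertices \(a,b\) together with all \(2n\) literal vertices \(x_i,\bar x_i\).
\item Add degree-two helper vertices as follows:
  \begin{itemize}
  \item \(h_{ab}\), adjacent to \(a\) and \(b\);
  \item for every literal vertex \(\ell\), a helper \(h_{a\ell}\), adjacent to \(a\) and \(\ell\);
  \item for every \(i\), a helper \(h_i\), adjacent to \(x_i\) and \(\bar x_i\).
  \end{itemize}
\item For each clause \(C_j=(\ell_1\vee\ell_2\vee\ell_3)\), add a clause vertex \(q_j\) adjacent exactly to the three literal vertices \(\ell_1,\ell_2,\ell_3\).
\end{itemize}
This is clearly polynomial. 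We use one local principle: in a \(3\)-domatic coloring, a degree-two vertex \(h\) with neighbors \(u,v\) has \(|N[h]|=3\). So \(h,u,v\) receive three pairwise distinct colors. In particular \(c(u)\neq c(v)\), and the color of \(h\) is forced to be the third color.

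\textbf{Soundness.} Let \(c\) be a \(3\)-domatic coloring. The helper \(h_{ab}\) forces \(c(a)\ne c(b)\); after renaming colors we may assume \(c(a)=3\) and \(c(b)=2\). Each \(h_{a\ell}\) forces \(c(\ell)\ne 3\), so every literal vertex has color \(1\) or \(2\). Each \(h_i\) forces \(c(x_i)\ne c(\bar x_i)\), so exactly one of \(x_i,\bar x_i\) has color \(1\). Define the assignment by: \(\ell\) is true iff \(c(\ell)=1\); this is consistent. Now suppose some clause \(C_j\) has all three literals colored \(2\). Then \(N[q_j]\) shows color \(2\) on the literals, and only \(q_j\) itself can supply the remaining colors. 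But \(N[q_j]\) needs both colors \(1\) and \(3\), and \(q_j\) has one color, a contradiction. Hence every clause contains a literal colored \(1\), and \(\varphi\) is satisfied.

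\textbf{Completeness.} Given a satisfying assignment, set
\begin{itemize}
\item \(c(a)=3\) and \(c(b)=2\);
\item \(c(\ell)=1\) for true literals and \(c(\ell)=2\) for false literals;
\item each helper gets the unique color missing from its two neighbors (well defined, since each helper's two neighbors have distinct colors by construction);
\item \(c(q_j)=3\) for every clause vertex.
\end{itemize}
Now check every closed neighborhood.
\begin{itemize}
\item Helpers see all three colors by design.
\item Each \(q_j\) sees color \(3\) on itself, color \(1\) on a true literal of \(C_j\), and color \(2\) on a false literal of \(C_j\). If all three literals of \(C_j\) are true, color \(2\) is still missing. We fix this by attaching \(q_j\) also to \(b\), which makes \(\deg(q_j)=4\); the soundness argument is unaffected, since \(c(b)=2\) adds nothing new there.
\item Every clique vertex sees all of \(K\), which contains \(a\) (color \(3\)), \(b\) (color \(2\)), and some true literal (color \(1\)).
\end{itemize}
So \(c\) is a \(3\)-domatic coloring.

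The step I expect to need the most care is exactly this completeness bookkeeping. It has two parts: making sure that every vertex type, in particular the clause vertices in the all-true case, sees all three colors, and confirming that the fixes needed for this (such as the edge \(q_jb\)) do not weaken the soundness argument.
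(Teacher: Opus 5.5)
Your proof is correct once the edge \(q_jb\) is included, but it uses different gadgets from the paper.

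\textbf{The paper's construction.} The paper places three anchors \(\tau,f,b\) in the clique, so every clique vertex is happy no matter how the literals are colored. Literal vertices may use all three colors, constrained only by \(h_i,x_i,\bar x_i\) being pairwise distinct. Each clause gets a vertex \(c_j\) adjacent to its literals, plus a degree-two helper \(q_j\) adjacent to \(\tau\) and \(c_j\). This helper bans the ``true'' color \(T=c(\tau)\) from \(c_j\) and \(q_j\), so \(c_j\) must receive \(T\) from a literal. The assignment is read off as ``colored \(T\) means true,'' with an arbitrary choice when neither \(x_i\) nor \(\bar x_i\) has color \(T\).

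\textbf{Your construction.} You force two anchors apart with \(h_{ab}\), and the \(2n\) helpers \(h_{a\ell}\) confine all literals to the palette \(\{1,2\}\). Together with \(h_i\), the literal coloring then \emph{is} a truth assignment. The clause vertex, attached to \(b\), is forced to carry color \(3\) itself and must pick up color \(1\) from a literal.

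\textbf{What each buys.} The paper's version keeps clique vertices trivially happy via the third anchor and needs no per-literal helpers. It is also deliberately shaped like the main four-versus-two construction, where the clause helper \(q_{j,P}\) is attached to anchors in the same way. Yours trades the clause helpers for the per-literal helpers \(h_{a\ell}\) and the edge to \(b\), and in exchange is rigid. After normalizing \(c(a)=3\), \(c(b)=2\), every vertex color is determined by the assignment. Hence \(3\)-domatic colorings correspond exactly, up to the six color permutations, to satisfying assignments, and the extraction has no slack case.

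\textbf{The edge \(q_jb\).} This edge belongs in the construction itself, not as a patch discovered during the completeness check. Without it, your clause gadget enforces a not-all-equal condition: the literals of \(C_j\) must show both colors \(1\) and \(2\), rather than encoding a disjunction. Completeness then genuinely fails whenever every satisfying assignment makes all literals of some clause true. A clause \((x_1\vee x_1\vee x_1)\) would even give \(q_j\) degree \(1\), forcing \(\dom(G_\varphi)\le 2\). With the edge built in from the start, your soundness and completeness checks go through exactly as you wrote them.
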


\begin{proof}
Let
\[
  \varphi = C_1 \wedge \cdots \wedge C_m
\]
be a 3-CNF formula over variables \(x_1,\ldots,x_n\), where
\[
  C_j = (\ell_{j,1}\vee \ell_{j,2}\vee \ell_{j,3})
\]
for \(1\leq j\leq m\).  We construct a graph \(G_\varphi\).

\paragraph{Vertex set.}
The graph \(G_\varphi\) has the following vertices.

First, it has three anchor vertices
\[
  \tau,\ f,\ b.
\]
The vertex \(\tau\) will be the true-color anchor.  The vertices \(f\)
and \(b\) serve as anchors for the other two colors.

For each variable \(x_i\), we create two literal vertices
\[
  x_i,\qquad \bar{x}_i.
\]
For each variable \(x_i\), we also create one variable-helper vertex
\[
  h_i.
\]
For each clause \(C_j\), we create one clause vertex
\[
  c_j
\]
and one clause-helper vertex
\[
  q_j.
\]

Thus
\[
\begin{aligned}
  V(G_\varphi)
  ={}& \{\tau,f,b\}
      \cup \{x_i,\bar{x}_i : 1\leq i\leq n\} \\
     &{}\cup \{h_i : 1\leq i\leq n\}
      \cup \{c_j,q_j : 1\leq j\leq m\}.
\end{aligned}
\]

\paragraph{The big clique.}
Put precisely the vertices
\[
  K
  :=
  \{\tau,f,b\}
  \cup
  \{x_i,\bar{x}_i : 1\leq i\leq n\}
\]
into one clique.  Thus the clique vertices are exactly the anchors and
the literal vertices.  The vertices not placed in this clique are the
variable helpers \(h_i\), the clause vertices \(c_j\), and the clause
helpers \(q_j\).

\paragraph{Variable consistency helpers.}
For each variable \(x_i\), add exactly the two edges
\[
  \{h_i,x_i\},\qquad \{h_i,\bar{x}_i\}.
\]
Hence
\[
  N[h_i]=\{h_i,x_i,\bar{x}_i\}.
\]
In any \(3\)-domatic coloring, the closed neighborhood \(N[h_i]\) must
see all three colors.  Since \(N[h_i]\) has exactly three vertices, the
vertices
\[
  h_i,\quad x_i,\quad \bar{x}_i
\]
must receive three distinct colors.  In particular, the two literal
vertices \(x_i\) and \(\bar{x}_i\) cannot both receive the color of
\(\tau\).

\paragraph{Clause helpers.}
For each clause \(C_j\), add exactly the two edges
\[
  \{q_j,\tau\},\qquad \{q_j,c_j\}.
\]
Hence
\[
  N[q_j]=\{q_j,\tau,c_j\}.
\]
In any \(3\)-domatic coloring, if \(\tau\) has color \(T\), then
\(q_j\) and \(c_j\) must receive the two colors different from \(T\).
Thus neither \(q_j\) nor \(c_j\) can have color \(T\).

\paragraph{Clause vertices.}
For each clause
\[
  C_j=(\ell_{j,1}\vee \ell_{j,2}\vee \ell_{j,3}),
\]
make \(c_j\) adjacent to the literal vertices corresponding to the three
literal occurrences \(\ell_{j,1},\ell_{j,2},\ell_{j,3}\).  Thus, if
\(\ell_{j,t}=x_i\), then \(c_j\) is adjacent to the literal vertex \(x_i\),
and if \(\ell_{j,t}=\bar{x}_i\), then \(c_j\) is adjacent to the literal
vertex \(\bar{x}_i\).  Since the graph is simple, repeated occurrences of
the same literal create no additional edge.

This completes the construction.  It is clearly computable in polynomial
time.

\paragraph{The degree upper bound.}
Each variable-helper vertex \(h_i\) has degree \(2\), and each
clause-helper vertex \(q_j\) has degree \(2\).  All other vertices have
degree at least \(2\).  Indeed, each anchor is contained in the clique
\(K\), each literal vertex is contained in \(K\) and is also adjacent to
its variable helper, and each clause vertex \(c_j\) is adjacent to \(q_j\)
and to the literal vertices of its clause.  Hence
\[
  \min\deg(G_\varphi)=2.
\]
Therefore, by the standard upper bound
\[
  \dom(G)\leq \min\deg(G)+1,
\]
we obtain
\[
  \dom(G_\varphi)\leq 3.
\]

\paragraph{Completeness.}
Assume that \(\varphi\) is satisfiable, and fix a satisfying truth
assignment.  We construct a \(3\)-domatic coloring of \(G_\varphi\) with
colors
\[
  T,\ F,\ B.
\]
Color the anchor vertices by
\[
  \tau=T,\qquad f=F,\qquad b=B.
\]

For each variable \(x_i\), color the literal vertices and the variable
helper as follows.  If \(x_i\) is true under the chosen assignment, set
\[
  x_i=T,\qquad \bar{x}_i=F,\qquad h_i=B.
\]
If \(x_i\) is false, set
\[
  x_i=F,\qquad \bar{x}_i=T,\qquad h_i=B.
\]
Thus, in either case,
\[
  N[h_i]=\{h_i,x_i,\bar{x}_i\}
\]
sees all three colors.

For each clause \(C_j\), color
\[
  c_j=F,\qquad q_j=B.
\]
Then
\[
  N[q_j]=\{q_j,\tau,c_j\}
\]
has colors \(B,T,F\), and hence \(q_j\) is happy.
Since the assignment satisfies \(C_j\), at least one literal occurrence of
\(C_j\) is true, and hence at least one corresponding literal neighbor of
\(c_j\) has color \(T\).
The vertex \(c_j\) itself has color \(F\),
and \(q_j\) has color \(B\).  Therefore \(N[c_j]\) sees all three
colors.

It remains to check the clique vertices.  Every clique vertex is adjacent
to all other vertices of \(K\), and the clique \(K\) contains the three
anchor vertices \(\tau,f,b\) with colors \(T,F,B\), respectively.
Consequently, every anchor vertex and every literal vertex sees all three
colors in its closed neighborhood.

We have checked every vertex type:
the anchors, the literal vertices, the variable helpers, the clause
helpers, and the clause vertices.  Hence every vertex is happy, and the
three color classes form three disjoint dominating sets of
\(G_\varphi\).  Thus
\[
  \dom(G_\varphi)\geq 3.
\]
Together with \(\dom(G_\varphi)\leq 3\), this gives
\[
  \dom(G_\varphi)=3.
\]

\paragraph{Soundness.}
Conversely, suppose that \(G_\varphi\) has a \(3\)-domatic coloring.
Let \(T\) be the color of \(\tau\).

We first inspect the variable helpers.  For each variable \(x_i\),
\[
  N[h_i]=\{h_i,x_i,\bar{x}_i\}
\]
must see all three colors.  Since this closed neighborhood has exactly three
vertices, \(h_i,x_i,\bar{x}_i\) have three distinct colors.  In particular,
\(x_i\) and \(\bar{x}_i\) cannot both have color \(T\).

Define a truth assignment as follows. If the positive literal vertex \(x_i\)
has color \(T\), set \(x_i\) to true.  If the negative literal vertex
\(\bar{x}_i\) has color \(T\), set \(x_i\) to false.  These two instructions
are never contradictory, because the helper vertex \(h_i\) forces \(x_i\)
and \(\bar{x}_i\) to have distinct colors.  If neither \(x_i\) nor
\(\bar{x}_i\) has color \(T\), assign \(x_i\) arbitrarily.

With this definition, every literal vertex of color \(T\) represents a true
literal under the assignment: a vertex \(x_i\) of color \(T\) makes the
literal \(x_i\) true, and a vertex \(\bar{x}_i\) of color \(T\) makes the
literal \(\bar{x}_i\) true.

Now fix a clause
\[
  C_j=(\ell_{j,1}\vee \ell_{j,2}\vee \ell_{j,3}).
\]
Since
\[
  N[q_j]=\{q_j,\tau,c_j\}
\]
must see all three colors and \(\tau\) has color \(T\), the vertices
\(q_j\) and \(c_j\) must receive the two colors different from \(T\).
In particular,
\[
  q_j \text{ has color different from } T
  \qquad\text{and}\qquad
  c_j \text{ has color different from } T.
\]

The clause vertex \(c_j\) must nevertheless see color \(T\) in its closed
neighborhood.  It does not see color \(T\) from itself, and it does not
see color \(T\) from \(q_j\).
The only remaining neighbors of \(c_j\) are the literal vertices
corresponding to the literal occurrences
\(\ell_{j,1},\ell_{j,2},\ell_{j,3}\).  Therefore at least one of these
literal vertices has color \(T\).

By the definition of the truth assignment, a literal vertex of color
\(T\) corresponds to a true literal.  Hence at least one literal of
\(C_j\) is true.  Since \(C_j\) was arbitrary, every clause of
\(\varphi\) is satisfied.  Thus
\[
  \dom(G_\varphi)\geq 3
  \quad\Longrightarrow\quad
  \varphi\in\SAT.
\]
Equivalently,
\[
  \varphi\notin\SAT
  \quad\Longrightarrow\quad
  \dom(G_\varphi)<3.
\]

Finally, \(G_\varphi\) has no isolated vertices.  Indeed, every anchor and
literal vertex lies in the clique \(K\), every variable helper is adjacent
to its two literal vertices, every clause helper is adjacent to \(\tau\)
and to its clause vertex, and every clause vertex is adjacent to its clause
helper.  Hence, by Lemma~\ref{lem:no-isolated-vertices},
\[
  \dom(G_\varphi)\geq 2.
\]
Therefore, if \(\varphi\notin\SAT\), then
\[
  2\leq \dom(G_\varphi)<3,
\]
and so
\[
  \dom(G_\varphi)=2.
\]

This proves both implications and completes the proof.
\end{proof}

\section{A SAT-to-DNP Reduction Avoiding Domatic Number Three}
\label{sec:main-gap}

We now prove the main reduction theorem.  As in the warm-up construction, we
use the coloring view of domatic partitions: a partition of the vertex set
into \(k\) dominating sets is viewed as a coloring with \(k\) colors, and a
vertex is happy if its closed neighborhood contains all \(k\) colors.

The purpose of this section is to construct the kind of reduction identified
by Riege and Rothe: a reduction from \(\mathrm{3SAT}\) to the domatic-number
problem whose image avoids domatic number \(3\)~\cite{RiegeRothe2004ExactDNP}.
More precisely, we construct a graph \(R(\varphi)\) with domatic number \(4\)
when \(\varphi\) is satisfiable and domatic number \(2\) otherwise.

\begin{theorem}[Main four-versus-two reduction]
\label{thm:main-gap}
There is a polynomial-time computable function \(R\) that maps every
\SAT{} instance \(\varphi\) to a graph \(R(\varphi)\) such that
\[
  \varphi\in\SAT
  \Longrightarrow
  \dom(R(\varphi))=4,
\]
and
\[
  \varphi\notin\SAT
  \Longrightarrow
  \dom(R(\varphi))=2.
\]
\end{theorem}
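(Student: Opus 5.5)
The plan is to follow the blueprint of \Cref{thm:warm-up-gap}, adapted to four colors. There are three points to establish. The upper bound \(\dom(R(\varphi))\le 4\) will come from \Cref{lem:degree-upper-bound}, by making the minimum degree exactly \(3\) through degree-\(3\) helper vertices. The lower bound \(\dom(R(\varphi))\ge 2\) will come from \Cref{lem:no-isolated-vertices}. The real content is the gap: a satisfying assignment must yield a \(4\)-domatic coloring, while \emph{any} \(3\)-domatic coloring must already yield a satisfying assignment. The second requirement is what makes \(\dom=3\) impossible. I emphasize it because merely joining \(S(\varphi)\) with \(K_1\) (via \Cref{lem:join-clique}) would give the values \(4\) and \(3\), which is exactly what must be avoided.

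\textbf{Construction.} I would start from a large clique \(K\) containing four anchors \(\tau,f,b,d\) and all literal vertices \(x_i,\bar x_i\). This makes every clique vertex happy as soon as the anchors carry four distinct colors. The logical constraints are then attached through helper vertices of degree exactly \(3\), whose closed neighborhoods have size \(4\). In a \(4\)-coloring such a neighborhood is rainbow, which is the rigid regime used in the warm-up. In a \(3\)-coloring, however, exactly one color repeats, so the warm-up argument does not transfer directly.

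To compensate, each constraint would be enforced twice. For each variable I would use two adjacent helpers \(h_i,h_i'\) with \(N[h_i]=N[h_i']=\{h_i,h_i',x_i,\bar x_i\}\). Each clause vertex \(c_j\) would be attached to its literal vertices together with a small gadget of degree-\(3\) helpers tied to \(\tau\). The gadget should be designed so that, even with only three colors, \(c_j\) and all its gadget neighbors are forced to avoid the color of \(\tau\). The intended assignment rule is: a literal is true iff its vertex has the color of \(\tau\).

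\textbf{Completeness.} Here I would mimic the warm-up. Color \(x_i,\bar x_i\) with \(T,F\) according to the assignment, give \(h_i,h_i'\) the colors \(B,D\), and give the clause gadgets fixed non-\(T\) colors. Then check each vertex type for all four colors; the \(T\) color at \(c_j\) is supplied by a true literal.

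\textbf{Soundness, the main obstacle.} Fix a \(3\)-domatic coloring. The difficulty is that a size-\(4\) neighborhood now has one degree of freedom. For example, \(x_i\) and \(\bar x_i\) could both carry \(\tau\)'s color while \(h_i,h_i'\) cover the other two colors. Likewise, \(c_j\) might itself carry color \(T\) and thereby satisfy its own requirement.

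My first attempt would be a case analysis that classifies every possible \(3\)-coloring of each gadget. Wherever that analysis leaves a loophole, I would add further helper layers: vertices whose neighborhoods overlap on \(c_j\), or on the pair \(\{x_i,\bar x_i\}\), with the goal that the repeated color is always pinned to a helper and never to a literal or clause vertex. Each added layer must keep all degrees at least \(3\), so that no vertex's degree drops the cap below \(4\), and must remain colorable in the four-color completeness construction.

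If such local forcing turns out to be impossible, I would fall back on a global counting argument instead. The idea there is to show that the color of \(\tau\) occurs on at most one literal per variable, by double counting over the variable helpers.
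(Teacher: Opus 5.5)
Your proposal leaves the decisive step, soundness, open, and the one gadget you do specify fails. With $N[h_i]=N[h_i']=\{h_i,h_i',x_i,\bar x_i\}$, a $3$-domatic coloring may give both $x_i$ and $\bar x_i$ the color $T$ of $\tau$, while $h_i,h_i'$ take the other two colors. If you do this for every variable, each $c_j$ sees $T$ through its literal neighbours. The clause requirement is then met vacuously, and your rule ``true iff colored $T$'' makes $x_i$ and $\bar x_i$ both true. This configuration violates no constraint of any variable gadget, so no double counting over the variable helpers can exclude it. Adding helper layers does not reach the root cause either. To keep $\dom=4$ attainable, every vertex needs degree at least $3$, so every closed neighbourhood has at least four vertices. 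Under three colors, such a neighbourhood does not by itself determine which of its vertices carries the repeated color. Local forcing needs prior knowledge that two specific vertices share a color, and gadgets tied to the single anchor $\tau$ never give you such a pair.

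The missing idea is the pigeonhole principle on the four anchors, which you put in the clique but never use. In a $3$-domatic coloring some anchor pair $P=\{a,b\}$ is monochromatic, say of color $T$. You cannot know which pair it is, so the paper builds one SAT-checking copy for each of the $\binom{4}{2}=6$ pairs and makes every helper of the $P$-copy adjacent to \emph{both} $a$ and $b$. Then $N[q_{j,P}]=\{q_{j,P},a,b,c_{j,P}\}$ forces $q_{j,P}$ and $c_{j,P}$ onto the two non-$T$ colors, so some literal neighbour of $c_{j,P}$ has color $T$. Likewise, the consistency helper $h^{\alpha,\beta}_{i,P}$, adjacent to $a,b,p^{\alpha}_{i,P},n^{\beta}_{i,P}$, would see at most two colors if both literal vertices had color $T$. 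This closes both loopholes you identified, and only this one copy is used in the soundness argument. For completeness, the anchors get four distinct colors, and the colors of $a,b$ play the role of ``true'' in the $P$-copy. Each literal is doubled ($p^{1}_{i,P},p^{2}_{i,P},n^{1}_{i,P},n^{2}_{i,P}$), so a true literal supplies both anchor colors to $c_{j,P}$, while $c_{j,P}$ and $q_{j,P}$ carry the two remaining colors.
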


\begin{proof}
We describe the construction and then prove the two implications.  We may
assume that
\[
  \varphi = C_1\wedge C_2\wedge \cdots \wedge C_m
\]
has \(m\geq 1\) clauses and that each clause contains exactly three literals.
Write the variables as \(x_1,\ldots,x_n\).

\paragraph{Intuition.}
The key trick in the construction is a pigeonhole argument.  We introduce
four anchor vertices, but in a hypothetical \(3\)-domatic coloring these four
anchors receive only three colors.  Hence two anchors must receive the same
color.  Since we do not know in advance which pair this will be, we build
one SAT-checking copy for each unordered pair of anchors.  Thus there are
\(\binom{4}{2}=6\) copies.  In the soundness proof, the pigeonhole principle
selects an anchor pair \(P=\{a,b\}\) whose two anchors have the same color;
the corresponding \(P\)-copy is then forced to encode a satisfying assignment.

In the completeness direction, by contrast, the four anchors are given four
distinct colors.  In the copy indexed by \(P=\{a,b\}\), the two colors of
\(a\) and \(b\) act as the two truth colors for that copy, while the remaining
two colors serve as auxiliary colors.  A satisfying assignment can then be
used to make every vertex happy.

\paragraph{Construction.}
Let
\[
  A=\{a_1,a_2,a_3,a_4\}
\]
be the set of anchor vertices, and let
\[
  \mathcal P=\binom{A}{2}
\]
be the set of unordered anchor pairs.  For \(P=\{a,b\}\in \mathcal P\), we
call \(a\) and \(b\) the anchors of the \(P\)-copy.

For each anchor pair \(P\in\mathcal P\) and each variable \(x_i\), create
four literal vertices
\[
  p_{i,P}^1,\quad p_{i,P}^2,\quad n_{i,P}^1,\quad n_{i,P}^2.
\]
The vertices \(p_{i,P}^1,p_{i,P}^2\) represent the positive literal \(x_i\),
and the vertices \(n_{i,P}^1,n_{i,P}^2\) represent the negative literal
\(\bar{x}_i\).

Put all anchors and all literal vertices into one clique.  Thus the clique
vertices are exactly
\[
  A
  \;\cup\;
  \{p_{i,P}^1,p_{i,P}^2,n_{i,P}^1,n_{i,P}^2
      \mid 1\leq i\leq n,\ P\in\mathcal P\}.
\]

Next, for each anchor pair \(P=\{a,b\}\), each variable \(x_i\), and each
\(\alpha,\beta\in\{1,2\}\), create one consistency helper
\[
  h_{i,P}^{\alpha,\beta}.
\]
This vertex is adjacent exactly to
\[
  a,\quad b,\quad p_{i,P}^{\alpha},\quad n_{i,P}^{\beta}.
\]
No consistency helper is placed in the clique.

Finally, for each anchor pair \(P=\{a,b\}\) and each clause
\[
  C_j=(\ell_{j,1}\vee \ell_{j,2}\vee \ell_{j,3}),
\]
create one clause vertex \(c_{j,P}\) and one clause helper \(q_{j,P}\).
Connect \(q_{j,P}\) exactly to
\[
  a,\quad b,\quad c_{j,P}.
\]
The clause vertex \(c_{j,P}\) is connected to \(q_{j,P}\).  In addition,
for each literal occurrence \(\ell_{j,t}\) of \(C_j\), it is adjacent to the
two literal vertices corresponding to that occurrence.  More explicitly, if
\(\ell_{j,t}=x_i\), then connect \(c_{j,P}\) to
\[
  p_{i,P}^1,\quad p_{i,P}^2,
\]
and if \(\ell_{j,t}=\bar{x}_i\), then connect \(c_{j,P}\) to
\[
  n_{i,P}^1,\quad n_{i,P}^2.
\]
Since the graph is simple, repeated occurrences of the same literal create
no additional edge.

These are all edges of the graph.  This completes the construction of
\(R(\varphi)\).  The construction clearly has size polynomial in
\(|\varphi|\).

The vertex types of \(R(\varphi)\) are therefore the anchors, the literal
vertices, the consistency helpers, the clause vertices, and the clause
helpers.
The anchors and literal vertices are precisely the clique vertices.  The
consistency helpers, clause vertices, and clause helpers are precisely the
vertices outside the clique.

\paragraph{Degree upper bound.}
For every anchor pair \(P=\{a,b\}\) and every clause \(C_j\), the clause
helper \(q_{j,P}\) is adjacent exactly to
\[
  a,\quad b,\quad c_{j,P}.
\]
Hence
\[
  \deg(q_{j,P})=3.
\]
In any \(k\)-domatic coloring, every vertex must see all \(k\) colors in
its closed neighborhood.  Therefore
\[
  k \leq |N[v]|=\deg(v)+1
\]
for every vertex \(v\).  Applying this to \(q_{j,P}\), we obtain
\[
  \dom(R(\varphi))\leq 4.
\]

We also record that \(R(\varphi)\) has no isolated vertices.  The anchors
and literal vertices lie in the big clique, each consistency helper has four
neighbors, each clause helper has three neighbors, and each clause vertex
is adjacent to its clause helper and to literal vertices.  Since every graph
without isolated vertices has domatic number at least \(2\), it follows that
\[
  \dom(R(\varphi))\geq 2.
\]

\paragraph{Completeness.}
Assume that \(\varphi\) is satisfiable, and let \(\sigma\) be a satisfying
truth assignment.  We construct a \(4\)-domatic coloring of \(R(\varphi)\).

Use the colors \(1,2,3,4\).  Color the anchors \(a_1,a_2,a_3,a_4\) with
four distinct colors.  Now fix an anchor pair
\[
  P=\{a,b\}.
\]
Let the colors of \(a\) and \(b\) be the two anchor colors of the \(P\)-copy,
and call the remaining two colors the two background colors of the
\(P\)-copy.

For each variable \(x_i\), color the literal vertices in the \(P\)-copy as
follows.  If \(\sigma(x_i)=\mathrm{true}\), then color
\[
  p_{i,P}^1,\ p_{i,P}^2
\]
with the two anchor colors of \(P\), one each, and color
\[
  n_{i,P}^1,\ n_{i,P}^2
\]
with the two background colors of \(P\), one each.  If
\(\sigma(x_i)=\mathrm{false}\), do the reverse: color
\[
  n_{i,P}^1,\ n_{i,P}^2
\]
with the two anchor colors of \(P\), one each, and color
\[
  p_{i,P}^1,\ p_{i,P}^2
\]
with the two background colors of \(P\), one each.

We next color the vertices outside the clique.

First consider a consistency helper
\[
  h_{i,P}^{\alpha,\beta}.
\]
Its neighbors are
\[
  a,\quad b,\quad p_{i,P}^{\alpha},\quad n_{i,P}^{\beta}.
\]
The vertices \(a\) and \(b\) already give the two anchor colors of the
\(P\)-copy.  Moreover, by the way the literal vertices were colored, exactly
one of \(p_{i,P}^{\alpha}\) and \(n_{i,P}^{\beta}\) has an anchor color, and
the other has a background color.  Let this background color be \(B_1\).
Color \(h_{i,P}^{\alpha,\beta}\) with the other background color \(B_2\).
Then the closed neighborhood
\[
  N[h_{i,P}^{\alpha,\beta}]
\]
contains the two anchor colors, coming from \(a\) and \(b\), and the two
background colors, namely \(B_1\) from one of the literal neighbors and
\(B_2\) from the helper itself.  Hence it contains all four colors.

Now consider a clause \(C_j\) and an anchor pair \(P=\{a,b\}\).  Since
\(\sigma\) satisfies \(C_j\), at least one literal of \(C_j\) is true under
\(\sigma\).  The two literal vertices corresponding to this true literal in
the \(P\)-copy have the two anchor colors of \(P\).  Color
\[
  c_{j,P}
  \quad\text{and}\quad
  q_{j,P}
\]
with the two background colors of \(P\), one each.

We now check every vertex type.

The anchors and literal vertices lie in the big clique.  Since the four
anchors already have the four distinct colors, every clique vertex sees all
four colors in its closed neighborhood.

Every consistency helper has just been colored so that its closed
neighborhood contains the two anchor colors and the two background
colors of its copy.  Hence every consistency helper is happy.

For every clause helper \(q_{j,P}\), we have
\[
  N[q_{j,P}]
  =
  \{q_{j,P},a,b,c_{j,P}\}.
\]
The anchors \(a,b\) have the two anchor colors of \(P\), while
\(q_{j,P}\) and \(c_{j,P}\) have the two background colors of \(P\).
Thus every clause helper is happy.

Finally, consider a clause vertex \(c_{j,P}\).  It sees the two background
colors from itself and \(q_{j,P}\).  It also sees the two anchor colors
from the two literal vertices corresponding to a true literal of \(C_j\).
Thus every clause vertex is happy.

Hence the coloring is a \(4\)-domatic coloring of \(R(\varphi)\).  Therefore
\[
  \dom(R(\varphi))\geq 4.
\]
Together with the degree upper bound \(\dom(R(\varphi))\leq 4\), this gives
\[
  \dom(R(\varphi))=4.
\]

\paragraph{Soundness.}
Assume that \(R(\varphi)\) has a \(3\)-domatic coloring.  We show that
\(\varphi\) is satisfiable.

The four anchors receive only three colors.  Hence, by the pigeonhole
principle, there is an anchor pair
\[
  P=\{a,b\}
\]
whose two anchors have the same color.  Call this color \(T\).  We use
the \(P\)-copy to define a truth assignment.

For each variable \(x_i\), set
\[
  \sigma(x_i)=\mathrm{true}
\]
if at least one of
\[
  p_{i,P}^1,\ p_{i,P}^2
\]
has color \(T\).  If neither \(p_{i,P}^1\) nor \(p_{i,P}^2\) has color
\(T\), set
\[
  \sigma(x_i)=\mathrm{false}.
\]

We first check the consistency helpers.  We claim that no variable has both
a positive and a negative literal vertex of color \(T\) in the \(P\)-copy.
Indeed, suppose that for some \(\alpha,\beta\in\{1,2\}\), both
\[
  p_{i,P}^{\alpha}
  \quad\text{and}\quad
  n_{i,P}^{\beta}
\]
have color \(T\).  The consistency helper \(h_{i,P}^{\alpha,\beta}\) is
adjacent exactly to
\[
  a,\quad b,\quad p_{i,P}^{\alpha},\quad n_{i,P}^{\beta}.
\]
All four of these neighbors have color \(T\).  Hence
\(N[h_{i,P}^{\alpha,\beta}]\) contains at most the color \(T\) and the
color of \(h_{i,P}^{\alpha,\beta}\) itself, and so it contains at most two
colors.  This contradicts the assumption that the coloring is
\(3\)-domatic.  Therefore, if a negative literal vertex \(n_{i,P}^{\beta}\) has
color \(T\), then neither positive literal vertex of \(x_i\) has color
\(T\), and our assignment sets \(\sigma(x_i)=\mathrm{false}\).  Thus the
literal vertices encode truth values consistently.

We next check the clause helpers.  Fix a clause \(C_j\).  Since
\[
  N[q_{j,P}]
  =
  \{q_{j,P},a,b,c_{j,P}\},
\]
and since \(a\) and \(b\) both have color \(T\), the only vertices in
\(N[q_{j,P}]\) that can supply colors different from \(T\) are
\(q_{j,P}\) and \(c_{j,P}\).  Because the coloring is \(3\)-domatic,
\(N[q_{j,P}]\) must contain all three colors.  Hence \(q_{j,P}\) and
\(c_{j,P}\) must receive the two distinct colors different from \(T\).
In particular,
\[
  q_{j,P}\text{ has color different from }T
  \quad\text{and}\quad
  c_{j,P}\text{ has color different from }T.
\]

We now check the clause vertex \(c_{j,P}\).
The vertex \(c_{j,P}\) is adjacent to \(q_{j,P}\) and to the literal vertices
corresponding to the three literal occurrences of \(C_j\).
Since both \(c_{j,P}\) and \(q_{j,P}\) have
colors different from \(T\), the only way for \(c_{j,P}\) to see color
\(T\) in its closed neighborhood is through one of its literal neighbors.
Because the coloring is \(3\)-domatic, \(c_{j,P}\) must see color \(T\).
Therefore at least one literal neighbor of \(c_{j,P}\) has color \(T\).

If this \(T\)-colored literal neighbor is one of
\(p_{i,P}^1,p_{i,P}^2\), then the corresponding literal \(x_i\) is true
under \(\sigma\).  If it is one of \(n_{i,P}^1,n_{i,P}^2\), then, by the
consistency argument above, neither \(p_{i,P}^1\) nor \(p_{i,P}^2\) has
color \(T\), and so \(\sigma(x_i)=\mathrm{false}\); hence the corresponding
literal \(\bar{x}_i\) is true under \(\sigma\).  Thus the clause \(C_j\)
has at least one true literal.

Since \(C_j\) was arbitrary, every clause of \(\varphi\) is satisfied by
\(\sigma\).  Hence
\[
  \varphi\in\SAT.
\]
Thus every \(3\)-domatic coloring of \(R(\varphi)\) yields a satisfying
assignment for \(\varphi\).  By the monotonicity observation stated in the preliminaries, if \(\dom(R(\varphi))\geq 3\), then
\(R(\varphi)\) has a \(3\)-domatic coloring: starting from any
\(r\)-domatic coloring with \(r\geq 3\), merge color classes until only
three colors remain.  Hence
\[
  \dom(R(\varphi))\geq 3
  \quad\Longrightarrow\quad
  \varphi\in\SAT.
\]
Equivalently,
\[
  \varphi\notin\SAT
  \quad\Longrightarrow\quad
  \dom(R(\varphi))<3.
\]
Since \(R(\varphi)\) has no isolated vertices, Lemma~\ref{lem:no-isolated-vertices}
gives
\[
  \dom(R(\varphi))\geq 2.
\]
Therefore, if \(\varphi\notin\SAT\), then
\[
  \dom(R(\varphi))=2.
\]

Combining completeness and soundness proves the theorem.
\end{proof}

\section{Exact Domatic Number Problems}
\label{sec:exact-dnp-application}

As observed in \cref{lem:exact-dnp-in-dp},
\(\ExactDNP{k}\in\DP\) for every fixed \(k\).
This is the standard membership argument for exact optimization problems:
one writes the exact condition as the conjunction of a lower-bound condition
and the complement of the next larger lower-bound condition.  See
\cite{PapadimitriouYannakakis1984} for the class \(\DP\) and exact and
critical problems, and
\cite{RiegeRothe2004ExactDNP} for exact domatic-number
background.

Riege and Rothe proved that
\(\ExactDNP{k}\) is \(\DP\)-complete for every
fixed \(k\geq 5\), and that
\(\ExactDNP{2}\) is \(\coNP\)-complete
\cite{RiegeRothe2004ExactDNP}.  The case \(k=1\) is
polynomial-time decidable: equivalently, one tests whether the input graph has
an isolated vertex.  Thus, before the present work, the only unresolved
single-target exact domatic-number cases were \(k=3\) and \(k=4\). 

Riege and Rothe also pointed out a natural route for closing this remaining
classification gap: it would suffice to find a reduction to the domatic-number
problem whose output graphs all have domatic number different from
\(3\)~\cite{RiegeRothe2004ExactDNP}.  The main reduction theorem of the
present paper provides exactly such a reduction.  Indeed, by
Theorem~\ref{thm:main-gap},
\[
  \dom(R(\varphi)) =
  \begin{cases}
    4, & \text{if \(\varphi\in 3\mathrm{SAT}\),}\\
    2, & \text{if \(\varphi\notin 3\mathrm{SAT}\),}
  \end{cases}
\]
and hence \(\dom(R(\varphi))\in\{2,4\}\) for every formula \(\varphi\).
Together with the warm-up reduction theorem, this gives the short
DP-hardness reduction for \(\mathrm{Exact}\text{-}3\text{-DNP}\) below.

\begin{theorem}[Exact-three domatic number]
\label{thm:exact-three-dnp-dp-complete}
\(\ExactDNP{3}\) is \(\DP\)-complete.
\end{theorem}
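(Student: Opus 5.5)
Membership is \cref{lem:exact-dnp-in-dp}, so only \DP-hardness needs proof. I will reduce \(\SATUNSAT\) to \(\ExactDNP{3}\). Given an instance \((\varphi,\psi)\), I combine the two SAT-to-DNP reductions: the warm-up graph \(S(\varphi)\) from Theorem~\ref{thm:warm-up-gap} with \(\dom\in\{2,3\}\), and the avoidance graph \(R(\psi)\) from Theorem~\ref{thm:main-gap} with \(\dom\in\{2,4\}\). The idea is to raise the \(\psi\)-part by one via joining a clique, and then take a disjoint union so that the minimum picks out the right value.

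Concretely, I define
\[
  H(\varphi,\psi) \;=\; S(\varphi)\ \dot\cup\ \bigl(R(\psi)\oplus K_1\bigr).
\]
This is computable in polynomial time. By \cref{lem:join-clique}, \(\dom(R(\psi)\oplus K_1)=\dom(R(\psi))+1\), which equals \(5\) if \(\psi\in\SAT\) and \(3\) if \(\psi\notin\SAT\). By \cref{lem:disjoint-union},
\[
  \dom(H)=\min\{\dom(S(\varphi)),\ \dom(R(\psi))+1\}.
\]

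There are four cases.
\begin{itemize}
  \item If \(\varphi\in\SAT\) and \(\psi\notin\SAT\), then \(\dom(H)=\min\{3,3\}=3\).
  \item If \(\varphi\notin\SAT\), then \(\dom(S(\varphi))=2\), so \(\dom(H)=2\neq 3\) regardless of \(\psi\).
  \item If \(\varphi\in\SAT\) and \(\psi\in\SAT\), then \(\dom(H)=\min\{3,5\}=4\)? No: the minimum is \(3\).
\end{itemize}
The last case breaks the naive combination, so I need to fix it.

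The correct arrangement makes the \(\varphi\)-part sit at or above \(3\) exactly when \(\varphi\in\SAT\), and makes the \(\psi\)-part equal to \(3\) exactly when \(\psi\notin\SAT\) and strictly larger otherwise. It must also avoid the situation where both parts exceed \(3\). I therefore use \(S(\varphi)\oplus K_1\), whose domatic number is \(4\) or \(3\). Since the min-combination is monotone, I expect disjoint union alone not to suffice, and I will instead aim for the standard Wagner/Riege--Rothe pattern. The target is a graph that equals \(3\) iff \(\varphi\in\SAT\) and \(\psi\notin\SAT\).

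Here is the key observation. The graph \(R(\psi)\oplus K_1\) has value in \(\{3,5\}\), so the condition ``\(\dom\le 3\)'' is exactly \(\psi\notin\SAT\). The graph \(S(\varphi)\) has value in \(\{2,3\}\), so the condition ``\(\dom\ge3\)'' is exactly \(\varphi\in\SAT\). What I need is therefore ``\(\dom=3\)'' of a graph whose value is the \emph{lower} of \(S\) and the \emph{upper-capped} side. Since \(\dom(R(\psi))+1\ge 3\) always, I take
\[
  H=S(\varphi)\,\dot\cup\,(R(\psi)\oplus K_1)
\]
when \(\psi\notin\SAT\), which gives \(\dom(H)=\dom(S(\varphi))\). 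The remaining issue is the case where both formulas are satisfiable.

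To handle it, I will replace \(S(\varphi)\) by a graph whose value does not cap the union at \(3\) in that case. Specifically, I combine via the join of the two \emph{graphs} rather than a disjoint union, and check whether \(\dom(G\oplus G')\) behaves additively for these gadget graphs. This is what I expect to be the main obstacle. If the join fails to be additive, I will fall back on using the \(2\)-versus-\(4\) gap of \(R\) on both coordinates: the \(\psi\)-side then contributes a value that is \(2\) or at least \(4\), so ``exactly \(3\)'' can be produced only through the \(S(\varphi)\)-side. I would verify all four cases explicitly before committing to a final construction.
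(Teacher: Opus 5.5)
Your proposal never reaches a working reduction, because you put the two gadgets on the wrong coordinates. In $S(\varphi)\dot\cup(R(\psi)\oplus K_1)$ the $\varphi$-side equals exactly $3$ whenever $\varphi\in\SAT$. It therefore caps the minimum at $3$ regardless of $\psi$, and you correctly notice this in the case $\varphi,\psi\in\SAT$. However, your conclusion that disjoint union alone cannot suffice is wrong, and none of the alternatives you sketch is a proof:
\begin{itemize}
\item A reduction cannot choose its output according to whether $\psi\in\SAT$.
\item The join of two graphs is in general only superadditive for domatic number (e.g.\ $\dom(\overline{K_3}\oplus\overline{K_3})=\dom(K_{3,3})=3>1+1$), and you never establish additivity for your gadgets.
\item Your fallback, as phrased, has the $\psi$-side take value $2$ or at least $4$, with the value $3$ coming from $S(\varphi)$. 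This fails because the $\psi$-side is $2$ exactly when $\psi\notin\SAT$, which forces the minimum down to $2$ in precisely the target case $\varphi\in\SAT$, $\psi\notin\SAT$.
\end{itemize}

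The missing idea comes from asking what each coordinate must do. The $\varphi$-side must be strictly below $3$ when $\varphi\notin\SAT$ and strictly above $3$ when $\varphi\in\SAT$. That requires a gap that jumps over $3$, which is exactly what the avoidance reduction $R$ provides and what no shift $S\oplus K_t$ can provide. The $\psi$-side only needs to equal $3$ when $\psi\notin\SAT$ and exceed $3$ otherwise, and $S(\psi)\oplus K_1$ does exactly that.

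So swap the roles, as the paper does, and set $H_3(\varphi,\psi)=R(\varphi)\dot\cup(S(\psi)\oplus K_1)$. By \cref{lem:disjoint-union,lem:join-clique}, $\dom(H_3)=\min\{\dom(R(\varphi)),\dom(S(\psi))+1\}$. The first entry is $4$ or $2$ and the second is $4$ or $3$, according to satisfiability. Hence the minimum is $3$ when $\varphi\in\SAT$ and $\psi\notin\SAT$, is $4$ when both formulas are satisfiable, and is $2$ whenever $\varphi\notin\SAT$. Thus $\dom(H_3)=3$ if and only if $(\varphi,\psi)\in\SATUNSAT$.

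Your own second-coordinate gadget $R(\psi)\oplus K_1$, with values $5$ or $3$, would also work on the $\psi$-side; only the first coordinate must use $R$. Together with \cref{lem:exact-dnp-in-dp}, this completes the proof using nothing beyond disjoint union and joining a single vertex.
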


\begin{proof}
Membership in \(\DP\) follows from
\cref{lem:exact-dnp-in-dp}.  It remains to prove
\(\DP\)-hardness.

We reduce from the standard \DP-complete problem
\(\SATUNSAT\) \cite{PapadimitriouYannakakis1984}.
We may assume that \(\varphi\) and \(\psi\) are 3-CNF formulas.

Let \(R\) be the reduction from \cref{thm:main-gap}, and let
\(S\) be the reduction from \cref{thm:warm-up-gap}.  Thus
\[
  \dom(R(\varphi))
  =
  \begin{cases}
    4, & \text{if } \varphi\in\SAT,\\
    2, & \text{if } \varphi\notin\SAT,
  \end{cases}
\]
and
\[
  \dom(S(\psi))
  =
  \begin{cases}
    3, & \text{if } \psi\in\SAT,\\
    2, & \text{if } \psi\notin\SAT.
  \end{cases}
\]

Given \((\varphi,\psi)\), construct
\[
  H_3(\varphi,\psi)
  =
  R(\varphi)\dot\cup \bigl(S(\psi)\oplus K_1\bigr).
\]
This construction is polynomial-time computable.  By
\cref{lem:join-clique},
\[
  \dom(S(\psi)\oplus K_1)
  =
  \begin{cases}
    4, & \text{if } \psi\in\SAT,\\
    3, & \text{if } \psi\notin\SAT.
  \end{cases}
\]
By \cref{lem:disjoint-union}, the domatic number of
\(H_3(\varphi,\psi)\) is the minimum of the two displayed values.  Hence we
obtain the following complete case distinction:
\[
\begin{array}{c|c|c}
\text{case} &
\bigl(\dom(R(\varphi)),\dom(S(\psi)\oplus K_1)\bigr) &
\dom(H_3(\varphi,\psi))
\\ \hline
\varphi\in\SAT,\ \psi\notin\SAT & (4,3) & 3 \\
\varphi\in\SAT,\ \psi\in\SAT    & (4,4) & 4 \\
\varphi\notin\SAT,\ \psi\notin\SAT & (2,3) & 2 \\
\varphi\notin\SAT,\ \psi\in\SAT    & (2,4) & 2
\end{array}
\]
Therefore
\[
  \dom(H_3(\varphi,\psi))=3
\]
if and only if
\[
  \varphi\in\SAT
  \quad\text{and}\quad
  \psi\notin\SAT.
\]
Equivalently,
\[
  (\varphi,\psi)\in\SATUNSAT
  \quad\Longleftrightarrow\quad
  H_3(\varphi,\psi)\in \ExactDNP{3}.
\]
Thus \(\ExactDNP{3}\) is \(\DP\)-hard, and hence
\(\DP\)-complete.
\end{proof}


\begin{theorem}[Exact-four domatic number]
\label{thm:exact-four-dnp-dp-complete}
\(\ExactDNP{4}\) is \(\DP\)-complete.
\end{theorem}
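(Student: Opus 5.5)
Membership in \(\DP\) is given by \cref{lem:exact-dnp-in-dp}, so the plan is to prove \(\DP\)-hardness by a reduction from \(\SATUNSAT\). The reduction will mirror the proof of \cref{thm:exact-three-dnp-dp-complete}, shifting the second component up by one more clique vertex. Given 3-CNF formulas \((\varphi,\psi)\), we construct
\[
  H_4(\varphi,\psi)
  =
  \bigl(R(\varphi)\oplus K_1\bigr)\dot\cup\bigl(S(\psi)\oplus K_2\bigr),
\]
where \(R\) is the reduction of \cref{thm:main-gap} and \(S\) is that of \cref{thm:warm-up-gap}. This graph is clearly computable in polynomial time.

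By \cref{lem:join-clique}, \(\dom(R(\varphi)\oplus K_1)\) equals \(5\) if \(\varphi\in\SAT\) and \(3\) otherwise. Likewise, \(\dom(S(\psi)\oplus K_2)\) equals \(5\) if \(\psi\in\SAT\) and \(4\) otherwise. By \cref{lem:disjoint-union}, \(\dom(H_4)\) is the minimum of these two values, which gives four cases:
\[
\begin{array}{c|c}
\text{case} & \dom(H_4(\varphi,\psi)) \\ \hline
\varphi\in\SAT,\ \psi\notin\SAT & \min(5,4)=4\\
\varphi\in\SAT,\ \psi\in\SAT & \min(5,5)=5\\
\varphi\notin\SAT,\ \psi\notin\SAT & \min(3,4)=3\\
\varphi\notin\SAT,\ \psi\in\SAT & \min(3,5)=3
\end{array}
\]
Hence \(\dom(H_4(\varphi,\psi))=4\) holds exactly when \((\varphi,\psi)\in\SATUNSAT\). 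This establishes \(\DP\)-hardness, and together with membership, \(\DP\)-completeness.

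The only delicate point is the role of the gap in \(R\). If \(\varphi\notin\SAT\), the first component must fall strictly below \(4\) regardless of \(\psi\). This is exactly where the value set \(\{2,4\}\) from \cref{thm:main-gap} is essential: after adding one clique vertex, the first component takes a value in \(\{3,5\}\), so it never equals \(4\). The thresholds must therefore be chosen so that both the \(R\)-component and the \(S\)-component are tight precisely at \(4\) in the intended case. The remaining checks are routine applications of the two lemmas.
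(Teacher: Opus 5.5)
Your reduction is correct. With \(H_4(\varphi,\psi)=(R(\varphi)\oplus K_1)\dot\cup(S(\psi)\oplus K_2)\), \cref{lem:join-clique} and \cref{lem:disjoint-union} give \(\dom(H_4)=\min\{\dom(R(\varphi))+1,\dom(S(\psi))+2\}\), and your case table is right.

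The paper argues differently. It does not go back to \(\SATUNSAT\) at all. Instead it reduces \(\ExactDNP{3}\), already shown \(\DP\)-complete in \cref{thm:exact-three-dnp-dp-complete}, to \(\ExactDNP{4}\) via \(G\mapsto G\oplus K_1\), using only \(\dom(G\oplus K_1)=\dom(G)+1\). Composing the paper's two reductions gives \(H_3(\varphi,\psi)\oplus K_1\). This differs from your graph only in that one universal vertex is joined to the whole disjoint union, rather than a separate clique to each component. Both graphs have domatic number \(\min\{\dom(R(\varphi))+1,\dom(S(\psi))+2\}\).

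The paper's route is more modular: all the combinatorial work sits in the \(k=3\) case, and the one-line shift extends to every \(k\geq 3\) (\cref{rem:shift-to-higher-exact-dnp}). Your route is self-contained relative to the two SAT reductions and does not need the Exact-3 theorem. It also makes explicit where the gap of \(R\) is used.

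One small inaccuracy in your closing commentary: in the intended case the \(R\)-component has value \(5\), not \(4\). Only the \(S\)-component is tight at \(4\). The value set \(\{2,4\}\) of \(R\) is what guarantees that the \(R\)-component is either \(\geq 5\) or \(\leq 3\), and never exactly \(4\).
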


\begin{proof}
Membership in \(\DP\) follows from
\cref{lem:exact-dnp-in-dp}.  For hardness, we reduce from
\(\ExactDNP{3}\), which is \(\DP\)-complete by
\cref{thm:exact-three-dnp-dp-complete}.

Given a graph \(G\), construct
\[
  G' = G\oplus K_1.
\]
This construction is polynomial-time computable.  By
\cref{lem:join-clique},
\[
  \dom(G')=\dom(G\oplus K_1)=\dom(G)+1.
\]
Hence
\[
  \dom(G)=3
  \quad\Longleftrightarrow\quad
  \dom(G')=4.
\]
Therefore
\[
  G\in \ExactDNP{3}
  \quad\Longleftrightarrow\quad
  G'\in \ExactDNP{4}.
\]
Thus \(\ExactDNP{4}\) is \(\DP\)-hard, and hence
\(\DP\)-complete.
\end{proof}

\paragraph{Proof of \cref{thm:main-result}.}
The claim follows from
\cref{thm:exact-three-dnp-dp-complete,thm:exact-four-dnp-dp-complete}.

\paragraph{Proof of \cref{cor:exact-k-dnp-classification}.}
For \(k=3\) and \(k=4\), this is \cref{thm:main-result}.  For every
fixed \(k\geq 5\), this is the result of Riege and Rothe
\cite{RiegeRothe2004ExactDNP}.

\begin{remark}
\label[remark]{rem:shift-to-higher-exact-dnp}
The preceding two theorems are the new single-target exact domatic-number
cases.  The higher cases \(k\geq 5\) were already proved by Riege and Rothe
\cite{RiegeRothe2004ExactDNP}.  Conversely, once
\cref{thm:exact-three-dnp-dp-complete} is available, the standard
join-with-a-clique shift gives a short alternative derivation of the hardness
part for every fixed \(k\geq 3\).

Indeed, fix \(k\geq 3\).  Given a graph \(G\), construct
\[
  G_k = G\oplus K_{k-3}.
\]
By \cref{lem:join-clique},
\[
  \dom(G_k)=\dom(G)+k-3.
\]
Therefore
\[
  \dom(G)=3
  \quad\Longleftrightarrow\quad
  \dom(G_k)=k.
\]
Thus the map \(G\mapsto G\oplus K_{k-3}\) reduces
\(\mathrm{Exact}\text{-}3\text{-DNP}\) to
\(\mathrm{Exact}\text{-}k\text{-DNP}\).  Together with the standard membership
in \(\mathrm{DP}\), this gives DP-completeness for every fixed \(k\geq 3\).
For \(k\geq 5\), this recovers the earlier hardness result of Riege and
Rothe~\cite{RiegeRothe2004ExactDNP}.
\end{remark}

\section{Conclusion}

We have closed the two remaining cases in the exact domatic-number classification
left open by Riege and Rothe.  The main technical step is a direct reduction
from \(\mathrm{3SAT}\) to graphs whose domatic numbers are restricted to the
two values \(2\) and \(4\).  In particular, the image of the reduction avoids
domatic number \(3\), which was the missing ingredient for the
exact-domatic-number application.

Together with the earlier results for \(k\geq 5\) and for \(k=2\), this yields
the complete classification for fixed exact domatic-number problems:
\(\mathrm{Exact}\text{-}2\text{-DNP}\) is coNP-complete, while
\(\mathrm{Exact}\text{-}k\text{-DNP}\) is DP-complete for every fixed
\(k\geq 3\).

\bibliographystyle{alphaurl}
\bibliography{references}

\end{document}